\newtheorem{theorem}{Theorem}
\newtheorem{assump}{Assumption}
\newtheorem{proposition}{Proposition}
\theoremstyle{definition}
\newtheorem{definition}{Definition}
\newtheorem{example}{Example}
\newtheorem{remark}{Remark}
\def\Snospace~{\S{}}
\def\thm@space@setup{
  \thm@preskip=15pt \thm@postskip=15pt % controls spacing before and
                                % after theorems
}
\def\indep{\perp\!\!\!\perp}
\newcommand{\argmin}{\operatornamewithlimits{argmin}}
\newcommand{\cov}{\text{Cov}}
\newcommand{\E}{{\bf E}}
\newcommand{\R}{\mathbb{R}}
\newcommand{\C}{\mathcal{C}}
\newcommand{\N}{\mathcal{N}}
\newcommand{\prob}{{\bf P}}
\newcommand{\plimarrow}{\stackrel{p}\longrightarrow}
\newcommand{\dlimarrow}{\stackrel{d}\longrightarrow}
\newcommand{\ind}{\bm{1}}
\newcommand{\zero}{\bm{0}}
\providecommand{\abs}[1]{\lvert#1\rvert} 
\providecommand{\norm}[1]{\lVert#1\rVert}
\newcommand*{\medcup}{\mathbin{\scalebox{1.5}{\ensuremath{\cup}}}}
\let\emptyset\varnothing
\providecommand{\abs}[1]{\lvert#1\rvert} 
\providecommand{\norm}[1]{\lVert#1\rVert}
\renewcommand{\qed}{\hfill \mbox{\raggedright \rule{0.08in}{0.08in}}} % black QED box
\renewenvironment{proof}[1][\proofname]{{\noindent\sc#1. }}{\qed\vspace{15pt}} % "proof" small caps
\title{\bf\sc Network Cluster-Robust Inference}
\author{Michael P.\ Leung\thanks{Department of Economics, UC Santa Cruz. E-mail: leungm@ucsc.edu. I thank the referees for suggestions that improved the exposition of the paper. I also thank seminar audiences at CUHK, Harvard/MIT, Sciences Po, the Stanford RAIN seminar, UCLA, U.\ Geneva, U.\ Penn., and U.\ Wisconsin for helpful comments. Part of this research was conducted at the University of Southern California.}}
\begin{document}
\maketitle
\onehalfspacing
 
\begin{abstract}

  {\sc Abstract.} Since network data commonly consists of observations from a single large network, researchers often partition the network into clusters in order to apply cluster-robust inference methods. Existing such methods require clusters to be asymptotically independent. Under mild conditions, we prove that, for this requirement to hold for network-dependent data, it is necessary and sufficient that clusters have low conductance, the ratio of edge boundary size to volume. This yields a simple measure of cluster quality. We find in simulations that when clusters have low conductance, cluster-robust methods control size better than HAC estimators. However, for important classes of networks lacking low-conductance clusters, the former can exhibit substantial size distortion. To determine the number of low-conductance clusters and construct them, we draw on results in spectral graph theory that connect conductance to the spectrum of the graph Laplacian. Based on these results, we propose to use the spectrum to determine the number of low-conductance clusters and spectral clustering to construct them. 
  
  \bigskip 

  \noindent {\sc JEL Codes}: C12, C21, C38

  \noindent {\sc Keywords}: social networks, clustered standard errors, spectral clustering
 
\end{abstract}

\newpage

%----------------------------------------------------------------------
\section{Introduction}\label{sintro}
%----------------------------------------------------------------------

Cluster-robust methods are widely used to account for cross-sectional dependence \citep{cameron2015practitioner,hansen2019asymptotic}. The standard model of cluster dependence partitions the set of observations into many independent clusters, but there are common settings in which observations cannot be so divided. Spatially or temporally dependent data often have the property that the correlation between observations decays with distance but is never precisely zero, and recent econometric work develops cluster-robust methods applicable to data of this type \citep{bester2011inference,canay2017randomization,canay2020wild,ibragimov2010t,ibragimov2016inference}. This paper studies the performance of these methods under network-mediated dependence.

Researchers commonly observe only a single large network, for instance the friendship network of a school or village, and clustered standard errors are frequently used to account for network dependence. However, it is often unclear how best to partition a network into clusters for the purpose of inference. Several papers employ ``community detection'' or ``network clustering'' unsupervised learning algorithms to divide the network into disjoint subnetworks according to some criteria and then cluster standard errors on the subnetworks \citep[e.g.][]{aral2017exercise,aral2019social,eckles2016estimating,zacchia2020knowledge}. However, the objectives of unsupervised learning algorithms do not necessarily align with the econometric objective of inference. Moreover, the algorithms typically depend on tuning parameters that can be chosen to mechanically increase the number of clusters returned, and there are no guidelines on how to choose this number for the purpose of inference. 

An alternative is to use HAC variance estimators, which are widely used to account for spatial or temporal dependence. The same estimators may be used for network dependence, essentially by interchanging spatial or temporal distance with network (path) distance \citep{kojevnikov2021bootstrap,kojevnikov2021limit}. However, simulation evidence has shown that for spatially or temporally dependent data, tests using HAC estimators can exhibit greater size distortion than cluster-robust inference methods in smaller samples \citep{conley2018inference,ibragimov2010t}. This is also born out in our simulations with network data but with significant caveats discussed below.

To our knowledge, there is presently no theoretical justification for applying cluster-robust methods to network-dependent data. \cite{eckles2016estimating} and \cite{zacchia2020knowledge} invoke the work of \cite{bester2011inference} to justify network clustering, but the latter's results are specific to spatial data, which, as we will discuss, have different properties in terms of the existence of sufficiently segregated clusters. Regarding HAC estimation, there are important differences in the choice of the bandwidth for data that is network, rather than spatially, dependent \citep{kojevnikov2021limit,leung2020causal}. A motivating question for this paper is whether cluster-robust methods also encounter novel complications when applied to network-dependent data.

\bigskip
\noindent {\bf Contributions.} We show that complications do exist. Whereas weakly dependent spatial data can always be partitioned into several ``quality'' clusters (we will define what we mean by ``quality''), multiple such clusters are not guaranteed to exist in a given network. We show that cluster-robust methods applied to networks that lack quality clusters can exhibit substantial size distortion. This motivates the methods provided in this paper for diagnosing whether quality clusters exist and determining how to construct them.\footnote{A Python package implementing the methods is available at \url{https://pypi.org/project/networkinference/}. \label{networkinference}}

We derive conditions on the clusters and data-generating process under which cluster-robust methods are valid under network dependence. Existing such methods require asymptotically independent clusters. \cite{bester2011inference} provide primitive sufficient conditions for asymptotic independence under spatial dependence, the key assumption being that cluster boundary sizes grow at a sufficiently slow rate relative to the sample size. For network data, we show that, under certain conditions, it is necessary and sufficient that clusters have low {\em conductance}, which is roughly the proportion of a cluster's links sent outside the cluster (formally defined in \autoref{ssum}).\footnote{Low conductance does {\em not} require that clusters be asymptotically disconnected because the number of between-cluster links can diverge.} 
%For network data, we show that, under certain conditions, it is necessary and sufficient for clusters to have low {\em conductance}, which is the ratio of a cluster's edge boundary size to its volume (defined in \autoref{ssum}).\footnote{To require that clusters have low conductance is {\em not} to imply that they are asymptotically disconnected. As discussed in \autoref{sspectra}, it is easy to generate graphs with both giant components and multiple low-conductance clusters.}
This yields a simple $[0,1]$-measure of cluster quality and suggests a (computationally infeasible) objective for constructing clusters: choose the set that minimizes conductance. The result connects the literature on cluster-robust inference to spectral graph theory and spectral clustering, which we draw upon to feasibly construct clusters.

Due to the topology of Euclidean space, clusters satisfying the \cite{bester2011inference} boundary rate condition can always be constructed for spatial data under increasing domain asymptotics. However, we show that in the network setting this may not be possible depending on the network formation process. As we discuss, for low-conductance clusters to exist, the network must have a sufficiently small (higher-order) {\em Cheeger constant}, which is a well-known graph invariant that measures network integration. Some classes of networks satisfy this condition, but many apparently do not.\footnote{A trivial example is a fully connected network, but we will discuss more realistic examples.} We therefore require methods to determine the existence of low-conductance clusters and, if they exist, to construct them.

Computing the Cheeger constant or set of clusters minimizing conductance turns out to be computationally infeasible. Fortunately, Cheeger inequalities imply that the lower eigenvalues of the graph Laplacian (defined in \autoref{sginvs}) are informative about the constant's magnitude. A simple argument (\autoref{easy}) shows that a set of $L$ low-conductance clusters exists if and only if the $L$th smallest eigenvalue is small, providing a practical diagnostic for determining both the existence and number of quality clusters. To then compute the clusters, we can apply $k$-means clustering to the eigenvectors, which corresponds to spectral clustering, a widely used unsupervised learning algorithm.

Our simulation results show that there are advantages to using cluster-robust methods for network data, relative to HAC estimators. We find that the randomization test of \cite{canay2017randomization}, a leading method for cluster-robust inference with a small number of clusters, better controls size in smaller samples, provided clusters have low conductance. However, when no such clusters exist, the test, when naively applied to the output of spectral clustering, can exhibit substantial size distortion even in large samples, unlike the HAC estimator. This is because clusters in this case cannot generally satisfy the requirement of asymptotic independence, so we expect all existing cluster-robust methods to exhibit similar size distortion.

Based on these results, we make three recommendations for empirical practice in \autoref{srecs}. These concern how to assess whether a given set of clusters is of sufficient ``quality'' (compute the conductance), how to assess whether quality clusters exist (compute the spectrum of the Laplacian), and how to compute such clusters if they exist (apply spectral clustering or possibly other community detection algorithms). We provide a Python package for implementing these steps, linked in \autoref{networkinference}.

Networks often contain only a small number of low-conductance clusters, as seen in our simulation results, and conventional clustered standard errors perform poorly with few clusters \citep{cameron2015practitioner}. Our results therefore utilize asymptotics sending the sizes of a fixed number of clusters to infinity, the same framework adopted by the literature on cluster-robust methods for few clusters \citep{bester2011inference,canay2017randomization,canay2020wild,ibragimov2010t,ibragimov2016inference}. Our formal results provide interpretable primitive conditions under which the key independence assumption imposed by these papers holds. We do not develop a new inference procedure; rather, we provide diagnostics to assess whether these existing procedures are valid when applied to network data and an asymptotic theory supporting the diagnostics.

\bigskip
\noindent {\bf Related Literature.} There is little work on how to best construct clusters even for spatial data \citep{conley2018inference}. Partitioning space into equally sized rectangles satisfies the boundary condition of \cite{bester2011inference}, but with irregularly spaced data, it may be possible to do better. \cite{ibragimov2016inference} provide a test for whether a finer cluster partition is the correct level of clustering compared to a coarser one. \cite{mueller2021spatial} propose novel standard errors for spatial dependence using the principal components of a benchmark correlation model.

\cite{jochmans2019fixed} demonstrate the utility of the spectrum for assessing the precision of fixed-effect network regressions. Our results highlight another practical use of the spectrum, namely to assess the validity of cluster-robust methods.

There is a large, cross-disciplinary literature on spectral clustering. \cite{von2007tutorial} is a well-known reference in computer science. A growing literature in statistics studies the performance of spectral clustering for estimating stochastic block models \citep[e.g.][]{lei2015consistency,rohe2011spectral}. The goal of this literature is to recover latent types (``communities''), and theoretical results establish convergence of the sample Laplacian to a population analog that identifies types. We are instead interested in spectral clustering from the graph conductance perspective of identifying small-boundary clusters, regardless of their association or lack thereof with a structural parameter such as type. This is significant because sufficiently dense graphs are required to identify types, whereas we study sparse graphs, which are more common in practice. The conductance perspective seems to be emphasized more in the computer science literature \citep[see e.g.][]{trevisan2016lecture}.

In the next section, we present the model. In \autoref{sov}, we summarize our main results and their intuition and make recommendations for empirical practice. We present two empirical illustrations in \autoref{sapp} and the asymptotic theory in \autoref{smain}. We then provide theoretical and simulation evidence in \autoref{sspectra} showing that important classes of graphs lack quality clusters. In \autoref{smc}, we present simulation results on the finite-sample properties of cluster-robust methods and HAC estimators. Finally, \autoref{sconclude} concludes.

%----------------------------------------------------------------------
\section{Setup}\label{smodel}
%----------------------------------------------------------------------

We observe a set of units $\mathcal{N}_n = \{1,\dots,n\}$, data $W_i \in \R^{d_w}$ associated with each unit $i\in\mathcal{N}_n$, and an undirected network or graph $\bm{A}$ on $\mathcal{N}_n$. We represent $\bm{A}$ as a binary, symmetric adjacency matrix with $ij$th entry $A_{ij}$, where $A_{ij}=1$ (0) signifies (the absence of) a link between $i$ and $j$. There are no self-links, meaning $A_{ii}=0$ for all $i$. \autoref{rwdg} below discusses possible extensions to weighted, directed networks.

Our analysis treats $\bm{A}$ as fixed (conditioned upon), whereas $\{W_i\}_{i=1}^n$ is random and not necessarily identically distributed. Let $\theta_0 \in \R^{d_\theta}$ be the estimand of interest and $g\colon \R^{d_w} \times \R^{d_\theta} \rightarrow \R^{d_g}$ a moment function such that
\begin{equation*}
  \E[g(W_i,\theta_0)]=\zero \quad\forall\, i \in \mathcal{N}_n.\footnote{More generally, we could relax this to $\abs{\C}^{-1} \sum_{i\in\C} \E[g(W_i,\theta_0)]=\zero$ for each cluster (defined below) $\C \subseteq \mathcal{N}_n$. Existing cluster-robust methods for a small number of clusters require homogeneity of $\theta_0$ across clusters, in which case this moment condition would not hold in design-based settings where $\theta_0$ corresponds to a finite-population average over all units in the network.} % this also rules out a superpop version of ANI since we are conditioning on the network 
\end{equation*}

\noindent Let $\hat G(\theta) = n^{-1} \sum_{i=1}^n g(W_i,\theta)$ and $\Psi_n$ be a weighting matrix, and define the generalized method of moments (GMM) estimator
\begin{equation*}
  \hat\theta = \argmin_\theta \hat G(\theta)' \Psi_n \hat G(\theta).
\end{equation*}

\noindent For example, to recover parameters of linear-in-means models, \cite{aral2017exercise}, \cite{eckles2016estimating}, and \cite{zacchia2020knowledge} use instrumental variables estimators, which are well-known special cases of GMM.

Various papers cited in \autoref{sintro} develop cluster-robust methods for GMM when $\{W_i\}_{i=1}^n$ satisfies weak spatial or temporal dependence. We instead employ a notion of weak network dependence due to \cite{kojevnikov2021limit}, formally defined in \autoref{smain}, which is analogous to mixing conditions used in spatial and time series econometrics. The key difference is the metric, which is path distance. For any two units $i,j$, their {\em path distance} $\ell_{\bm{A}}(i,j)$ is the length of the shortest path between them in $\bm{A}$.\footnote{A {\em path} between $i,j$ is a sequence of links $A_{k_1k_2}, A_{k_2k_3}, \dots, A_{k_{m-1}k_m}=1$ such that $k_1=i$, $k_m=j$, and $k_a \neq k_b$ for all $a,b \in \{1, \dots, m\}$. The {\em length} $\ell_{\bm{A}}(i,j)$ of this path is $m-1$. If $i\neq j$ and no path between $i,j$ exists, then we define $\ell_{\bm{A}}(i,j)=\infty$. If $i=j$, we define $\ell_{\bm{A}}(i,j)=0$.} Weak network dependence essentially demands that the correlation between $W_i$ and $W_j$ decays to zero as $\ell_{\bm{A}}(i,j) \rightarrow \infty$. 

%---------------------------------------------
\subsection{Clusters}\label{sclusters}
%---------------------------------------------

Cluster-robust methods take as input a partition of $\mathcal{N}_n$ into $L$ clusters, which we denote by $\{\C_\ell\}_{\ell=1}^L$. Being a partition, the clusters satisfy $\medcup_{\ell=1}^L \C_\ell = \mathcal{N}_n$ and $\C_\ell \cap \C_m = \emptyset$ for all $\ell\neq m$. Note that each $\C_\ell$ usually depends on $\bm{A}$ since different networks may be partitioned differently.

As mentioned in \autoref{sintro}, the number of quality clusters in a network is often small, so our formal results are specific to inference procedures robust to a small number of clusters, discussed in the next subsection. When referring specifically to such procedures, we use the label ``small-$L$ cluster-robust methods.'' Such methods have been shown to exhibit substantially improved size control relative to conventional cluster-robust procedures \citep{cameron2015practitioner}. We expect that our results remain relevant in settings with many clusters. 

A network may consist of multiple {\em components}, which are connected subnetworks that are disconnected (in the sense of infinite path distance) from the rest of the network. Under weak network dependence, observations in different components are independent, so components may therefore be treated as separate clusters. This implies that, if a network consists of many components, standard many-cluster asymptotics are applicable, and one can simply cluster standard errors on the components. 

However, a well-known stylized fact about real-world networks is that they typically possess a {\em giant component} containing the vast majority of units (formally, order $n$ units), and all other components are small \citep[formally, being $o(n)$ and typically $O(\log n)$ in size; see][]{barabasi2015}. For example, the giant component of the Facebook graph contains 99.91 percent of all units, whereas its second-largest component only has about 2000 units \citep{ugander2011anatomy}. Therefore, the key task for clustering is partitioning the giant component. All other components, being small, can be treated as individual clusters or grouped into a single cluster.

%---------------------------------------------
\subsection{Inference Procedures}\label{sinferp}
%---------------------------------------------

Suppose we have a set of clusters $\{\C_\ell\}_{\ell=1}^L$. Let $n_\ell = \abs{\C_\ell}$ be the cardinality of $\C_\ell$, $\hat\theta_\ell$ the GMM estimator computed using only observations in $\C_\ell$, and $\hat{G}_\ell(\theta) = n_\ell^{-1} \sum_{i\in\C_\ell} g(W_i,\theta)$ the sample moments constructed using only observations in $\C_\ell$. Small-$L$ cluster-robust methods use estimates $(\hat\theta_\ell)_{\ell=1}^L$ or moments $(\hat G_\ell(\hat\theta_\ell))_{\ell=1}^L$ (possibly for constrained versions of $\hat\theta_\ell$) to construct tests. A popular method is the wild bootstrap procedure due to \cite{cameron2008bootstrap}, whose formal properties under fixed-$L$ asymptotics are studied by \cite{canay2020wild}. Their results, as well as those of \cite{bester2011inference}, require clusters to satisfy certain homogeneity conditions, which are not imposed by \cite{canay2017randomization} and \cite{ibragimov2010t}.

\cite{cai2022implementation} argue that the randomization test of \cite{canay2017randomization} has several attractive properties relative to the other alternatives. We next define their test since it will be the focus of our simulation study in \autoref{smc}. Let $S_n = ( \sqrt{n}(\hat\theta_\ell-\theta) )_{\ell=1}^L$, and define the Wald statistic
\begin{equation*}
  T(S_n) = \left( \frac{1}{\sqrt{L}} \sum_{\ell=1}^L \sqrt{n}(\hat\theta_\ell-\theta)' \right) \left( \frac{1}{L} \sum_{\ell=1}^L n(\hat\theta_\ell-\theta)(\hat\theta_\ell-\theta)' \right)^{-1} \left( \frac{1}{\sqrt{L}} \sum_{\ell=1}^L \sqrt{n}(\hat\theta_\ell-\theta) \right),
\end{equation*}

\noindent where $\theta$ is the null value of $\theta_0$. This effectively treats $(\hat\theta_\ell)_{\ell=1}^L$ as $L$ independent observations, hence why asymptotic independence is required. To construct critical values, we generate a ``randomization distribution'' $\{T(\pi S_n)\colon \pi \in \{-1,1\}^L\}$ where $\pi S_n = ( \pi_\ell \sqrt{n}(\hat\theta_\ell-\theta) )_{\ell=1}^L$ for $\pi = (\pi_\ell)_{\ell=1}^L \in \{-1,1\}^L$. Let $\alpha$ be the level of the test, $k = \lceil 2^L(1-\alpha) \rceil$, and $T^{(k)}(S_n)$ be the $k$th largest value of $\{T(\pi S_n)\colon \pi \in \{-1,1\}^L\}$, a quantile of the randomization distribution. The test rejects if 
\begin{equation}
  T(S_n) > T^{(k)}(S_n). \label{sinfp}
\end{equation}

\noindent A confidence region can be obtained by inverting the test, automated in the package linked in \autoref{networkinference} as well as in R and Stata packages due to \cite{cai2022implementation}.

%----------------------------------------------------------------------
\section{Summary of Results and Recommendations}\label{sov}
%----------------------------------------------------------------------

We next define key concepts, summarize our theoretical results, and make practical recommendations for constructing clusters and assessing their quality.

%---------------------------------------------
\subsection{Conductance}\label{ssum}
%---------------------------------------------

In \autoref{smain}, we provide formal conditions under which a given set of clusters can be used for asymptotically valid cluster-robust inference. Here we provide an overview of the main ideas. We consider a sequence of networks with associated clusters indexed by the network size $n$, taking $n$ to infinity while keeping the number of clusters $L$ fixed. For economy of language, we often simply refer to a network rather than a sequence of networks when discussing asymptotic results.

Under weak network dependence and standard regularity conditions, we show that
\begin{equation}
  \resizebox{\columnwidth}{!}{
  $\frac{1}{\sqrt{n}} \begin{pmatrix} n_1 \hat{G}_1(\theta_0) \\ \vdots \\ n_L \hat{G}_L(\theta_0) \end{pmatrix} \dlimarrow \mathcal{N}(\zero,\bm{\Sigma}^*), \quad \bm{\Sigma}^* = \begin{pmatrix} \rho_1\bm{\Sigma}_{11} & \sqrt{\rho_1\rho_2}\, \bm{\Sigma}_{12} & \dots & \sqrt{\rho_1\rho_L}\, \bm{\Sigma}_{1L} \\ \sqrt{\rho_2\rho_1}\, \bm{\Sigma}_{21} & \rho_2\bm{\Sigma}_{22} & \dots & \sqrt{\rho_2\rho_L}\, \bm{\Sigma}_{2L} \\ \vdots & \vdots & \ddots & \vdots \\ \sqrt{\rho_L\rho_1}\, \bm{\Sigma}_{L1} & \sqrt{\rho_L\rho_2}\, \bm{\Sigma}_{L2} & \dots & \rho_L\bm{\Sigma}_{LL} \end{pmatrix},$ } \label{convergence}
\end{equation}

\noindent where $\rho_\ell = \lim_{n\rightarrow\infty} n_\ell/n$ and $\bm{\Sigma}_{\ell m} = \lim_{n\rightarrow\infty} \cov(\sqrt{n_\ell} \hat{G}_\ell(\theta_0), \sqrt{n_m} \hat{G}_m(\theta_0))$. This is an elementary but key intermediate result for establishing that the vector of GMM estimates $(\sqrt{n}(\hat\theta_\ell-\theta_0))_{\ell=1}^L$ is asymptotically normal. 

Conventional cluster-robust methods require independent clusters. Small-$L$ cluster-robust methods exploit the weaker requirement of {\em asymptotic independence}:
\begin{equation*}
  \bm{\Sigma}_{\ell m}=\zero \quad\forall\, \ell\neq m.\footnote{Assumption 2.2(i) of \cite{canay2020wild} directly imposes this requirement. Assumption 3.1(ii) of \cite{canay2017randomization} imposes symmetry of the limit distribution, which, under the group of transformations considered in their \S 4 and our equation \eqref{sinfp}, corresponds to this requirement.}
\end{equation*}

\noindent {\em We therefore interpret this as the key requirement for the validity of cluster-robust methods.} Our goal is to provide primitive restrictions on the clusters and data-generating process under which this holds. Toward this end, we next introduce some standard definitions from spectral graph theory \citep{chung1997spectral,trevisan2016lecture}.
\begin{definition} \hfill
  \begin{enumerate}[(a)]
    \item The {\em edge boundary size} of $S \subseteq \mathcal{N}_n$ with respect to $\bm{A}$ is 
      \begin{equation*}
	\abs{\partial_{\bm{A}}(S)} = \sum_{i \in S} \sum_{j\in \mathcal{N}_n\backslash S} A_{ij},
      \end{equation*}

      \noindent the number of links involving a unit in $S$ and a unit not in $S$.

    \item The {\em volume} of $S$ is $\text{vol}_{\bm{A}}(S) = \sum_{i\in S} \sum_{j=1}^n A_{ij}$, the sum of the {\em degrees} $\sum_{j=1}^n A_{ij}$ of units $i$ in $S$. 

    \item The {\em conductance} of $S$ (assuming it has at least one link) is
      \begin{equation*}
	\phi_{\bm{A}}(S) = \frac{\abs{\partial_{\bm{A}}(S)}}{\text{vol}_{\bm{A}}(S)}.
      \end{equation*}
  \end{enumerate}
\end{definition}

Conductance is a $[0,1]$ measure of how integrated $S$ is within $\bm{A}$. The denominator is a trivial upper bound on the numerator since it is possible to have all units in $S$ only connected to units outside of $S$. In this case, $S$ is maximally integrated with the rest of the network, and its conductance is one. On the other hand, any component has zero conductance and is therefore maximally segregated.

Our main condition for guaranteeing $\bm{\Sigma}_{\ell m}=\zero$ for all $\ell \neq m$ is 
\begin{equation}
  \max_{1\leq \ell \leq L} \phi_{\bm{A}}(\C_\ell) \rightarrow 0 \quad \text{as} \quad n\rightarrow\infty. \label{conductance}
\end{equation}

\noindent This says {\em the maximal conductance of the clusters is small}, which means each cluster's boundary size is of smaller order than its volume. Under additional conditions, \autoref{sufficiency} establishes sufficiency and \autoref{necessity} necessity. 

\bigskip 
\noindent {\bf Intuition.} At first glance, \eqref{conductance} may seem obvious; if links are primarily within, rather than between, clusters, then clusters appear to be mostly independent. However, this alone is insufficient because if two completely connected clusters are connected by only a single link, then all units are at most three links apart. Hence, we also require a restriction on the network's density for the clusters to be close to independent.

Real-world networks are often sparse \citep{barabasi2015}, meaning that most units are linked with a vanishingly small fraction of potential alters, or more formally, that the average degree is asymptotically bounded. Sparsity is implicitly required by our weak dependence conditions (\autoref{apsi}). We interpret sparsity as imposing an analog of increasing domain asymptotics (typical in spatial econometrics) in the sense that units must be minimally spaced apart if they are unlinked to most alters. For example, it rules out a completely connected network in which all units are linked, which is the densest possible topology. 

Now, if \eqref{conductance} holds, this means that, for two clusters of order $n$ size, the number of links connecting the clusters is $o(n)$. Consequently, given that units in the network are minimally spaced apart, most units in one cluster will be far from units in the other for large $n$. Such units are close to independent since correlation decays with distance under weak network dependence. Only units near the cluster boundaries (boundary units are those linked with units outside the cluster) are strongly dependent, but these have a negligible contribution under \eqref{conductance}. Hence, the clusters are approximately independent.\footnote{This is the same intuition for spatially \citep{bester2011inference} or temporally \citep{ibragimov2010t} correlated data, but it was previously unclear how this extended to network data. In particular, the network analogs of increasing domain asymptotics and the definition of ``boundary'' were not obvious.} As for necessity, as far as we are aware, there are no analogous prior results, but the intuition is straightforward. If \eqref{conductance} fails, then we could have, for example, each unit in one cluster directly linked to a unit in the other, in which case the clusters can be strongly dependent.

Note that \eqref{conductance} does not imply that the giant component asymptotically fractures into $L$ disconnected subnetworks. It requires the number of cross-cluster links to be small relative to within-cluster links, but the former is allowed to diverge with $n$. As a result, low-conductance clusters are not necessarily visually apparent. \autoref{RGGRCM} plots two random graphs with the same average degree from simulations in \autoref{ssimspec}, coloring units by clusters obtained via spectral clustering (see \autoref{sconstruct}). The maximal conductance in both panels is nearly identical (0.114 for the left panel and 0.116 for the right) even though clusters are more visually apparent in the left panel.

\begin{figure}
  \centering
  \begin{subfigure}{.5\textwidth}
    \centering
    \includegraphics[scale=0.35]{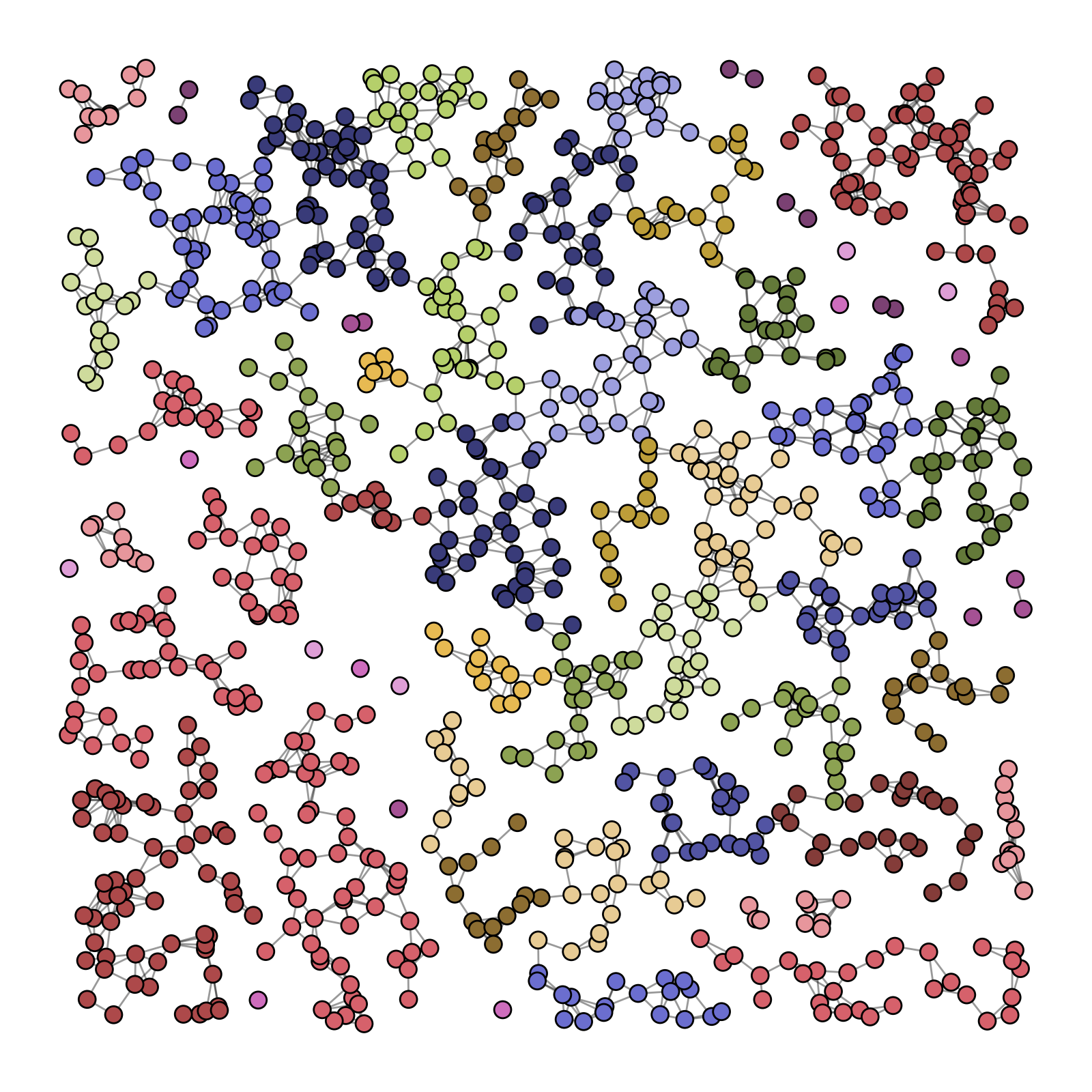}
  \end{subfigure}%
  \begin{subfigure}{.5\textwidth}
    \centering
    \includegraphics[scale=0.35]{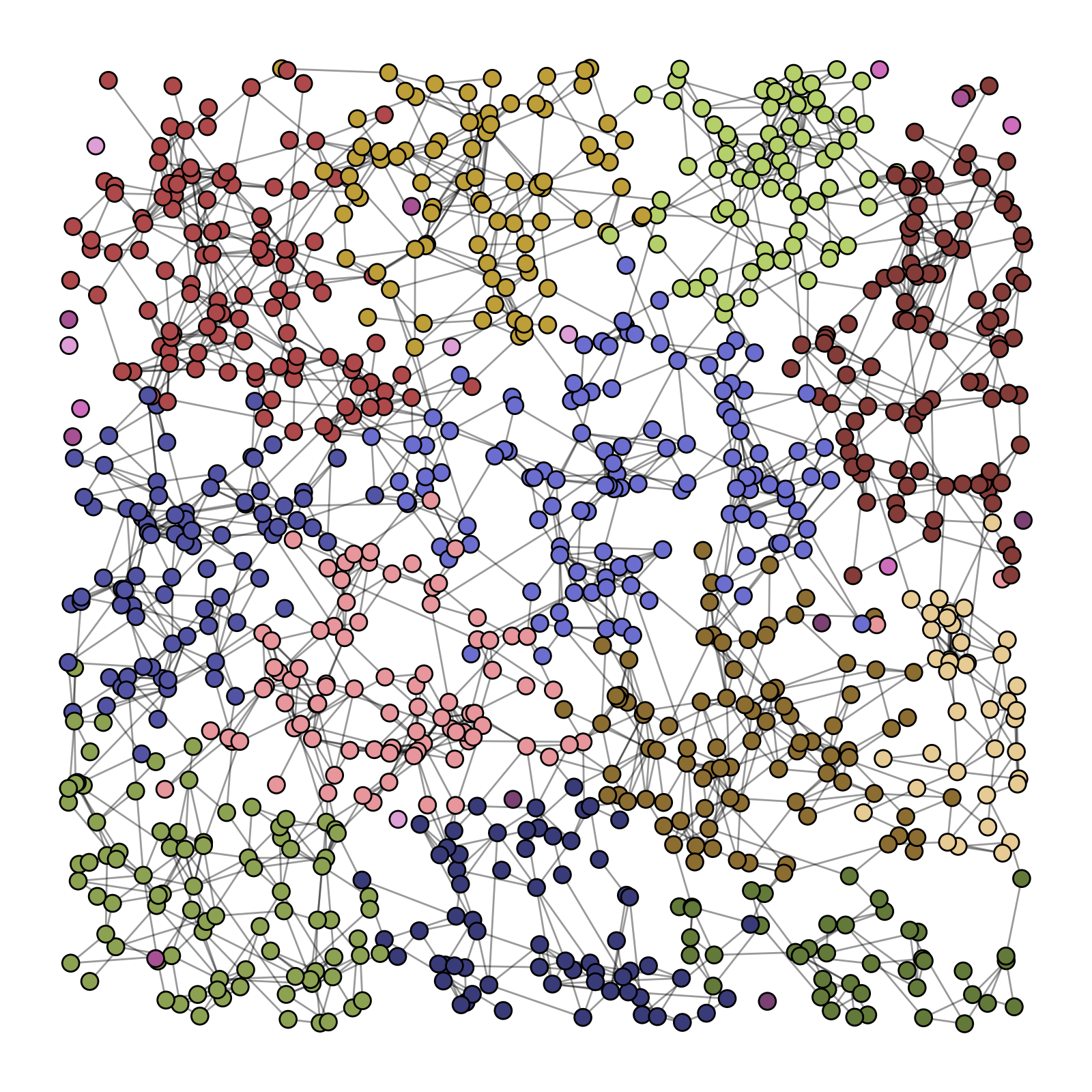}
  \end{subfigure}
  \caption{Low-conductance clusters of a random geometric graph (left panel) and a draw from the random connections model (right panel), obtained by spectral clustering. See \autoref{sspectra} for a description of these models. Some colors are reused for different clusters.}
  \label{RGGRCM}
\end{figure}

%---------------------------------------------
\subsection{Graph Invariants}\label{sginvs}
%---------------------------------------------

The next natural question is whether a sequence of clusters satisfying \eqref{conductance} necessarily exists for any given network sequence. In the spatial setting, one can always construct clusters satisfying the required boundary rate condition under increasing domain asymptotics \citep[][Assumption 2(iv)]{bester2011inference}, for example by partitioning $\R^2$ into rectangles, but this turns out not to be true in general for networks.

For any integer $k>1$, define the {\em $k$th-order Cheeger constant} of $\bm{A}$ 
\begin{equation}
  h_k(\bm{A}) = \min\left\{\max_{1\leq \ell \leq k} \phi_{\bm{A}}(S_\ell)\colon S_1, \dots, S_k  \text{ partitions } \mathcal{N}_n\right\}. \label{cheegercon}
\end{equation}

\noindent This is a well-known graph invariant that measures network integration. It is simply the lowest maximal conductance over all possible partitions of size $k$. For example, if $\bm{A}$ has $k$ components, then it is maximally segregated, and $h_k(\bm{A})=0$. 

Clearly, a necessary condition for \eqref{conductance} is
\begin{equation}
  h_L(\bm{A}) \rightarrow 0.
  \label{chee}
\end{equation}

\noindent Thus, if $h_L(\bm{A})$ could be computed, it would provide a simple way to assess whether low-conductance clusters exist. Furthermore, the argmin would be the best possible partition for cluster-robust inference. Unfortunately, this optimization problem is NP-complete \citep{vsima2006np}.

Fortunately, the spectrum of the graph Laplacian, which can be efficiently computed, is highly informative about the magnitude of the Cheeger constant. Let $\bm{D}$ be the $n\times n$ diagonal matrix with $ii$th entry equal to $i$'s degree $\sum_{j=1}^n A_{ij}$. The {\em (normalized) Laplacian} of $\bm{A}$ is
\begin{equation*}
  \bm{I} - \bm{D}^{-1/2}\bm{A}\bm{D}^{-1/2},
\end{equation*}

\noindent where $\bm{I}$ is the $n\times n$ identity matrix. Order the eigenvalues of the Laplacian as
\begin{equation*}
  \lambda_1 \leq \lambda_2 \leq \ldots \leq \lambda_n.
\end{equation*}

\noindent The following facts are well known: $\lambda_k \in [0,2]$ for all $k$, and $\lambda_k=0$ if and only if $\bm{A}$ has at least $k$ components (hence $\lambda_1=0$) \citep{chung1997spectral}. 

The second fact suggests that, if $\lambda_k$ is close to zero, then $\bm{A}$ should contain a set of $k$ clusters with low conductance since having $k$ components is the ``ideal'' case of $k$ zero-conductance clusters. The (higher-order) Cheeger inequality formalizes this intuition by relating Cheeger constants to the spectrum of the Laplacian as follows:
\begin{equation}
  \frac{\lambda_k}{2} \leq h_k(\bm{A}) \leq C \lambda_k^{1/2}, \label{cheegin}
\end{equation}

\noindent where $C$ is a constant that does not depend on $n$ and is $O(k^2)$ \citep[][Theorem 1.1]{lee2014multiway}.\footnote{\cite{chung1997spectral} and \cite{trevisan2016lecture} provide simpler proofs for $k=2$. The lower bound holds because, by the variational characterization of $\lambda_k$, one can rewrite $\lambda_k$ as the optimum of an objective that constitutes a continuous relaxation of the optimization problem corresponding to $h_k(\bm{A})$.} This yields the following simple corollary.

\begin{proposition}\label{easy}
  $h_L(\bm{A}) \rightarrow 0$ if and only if $\lambda_L \rightarrow 0$.
\end{proposition}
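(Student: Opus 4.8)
The plan is to derive \autoref{easy} directly from the higher-order Cheeger inequality \eqref{cheegin}, which the excerpt permits me to assume. The inequality sandwiches $h_L(\bm{A})$ between $\lambda_L/2$ and $C\lambda_L^{1/2}$, where crucially $C$ is $O(L^2)$ and does not depend on $n$. Since $L$ is held fixed throughout the asymptotic framework, $C$ is a genuine constant along the sequence. The proof is then essentially a two-line squeeze argument applied along the network sequence indexed by $n$.

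First I would recall that for the sequence of networks $\{\bm{A}\}$ (suppressing the $n$-index), \eqref{cheegin} gives $\tfrac{1}{2}\lambda_L \le h_L(\bm{A}) \le C\lambda_L^{1/2}$ for every $n$, with $C$ fixed. For the ``if'' direction, suppose $\lambda_L \to 0$. Then $C\lambda_L^{1/2} \to 0$ because $C$ is a fixed constant and $x \mapsto x^{1/2}$ is continuous at $0$; since $0 \le h_L(\bm{A}) \le C\lambda_L^{1/2}$, the squeeze theorem forces $h_L(\bm{A}) \to 0$. For the ``only if'' direction, suppose $h_L(\bm{A}) \to 0$. From $\tfrac{1}{2}\lambda_L \le h_L(\bm{A})$ and $\lambda_L \ge 0$ we get $0 \le \lambda_L \le 2 h_L(\bm{A}) \to 0$, so again by the squeeze theorem $\lambda_L \to 0$. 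That completes both directions.

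I do not anticipate a genuine obstacle here, since the substantive content is entirely contained in the cited Cheeger inequality; the only thing to be careful about is the role of $L$. Specifically, I would make explicit in the write-up that the constant $C = C(L)$ in \eqref{cheegin} is finite and independent of $n$ precisely because $L$ is fixed under the asymptotics adopted in this paper — this is what legitimizes passing to the limit in $n$ on both sides. A secondary point worth noting is the degenerate boundary case: if $\bm{A}$ has at least $L$ components for all large $n$, then $\lambda_L = 0 = h_L(\bm{A})$ identically and both statements hold trivially; the inequality handles this automatically, so no separate case is needed, but it is reassuring to observe consistency with the ``$\lambda_k = 0$ iff at least $k$ components'' fact recalled just before the proposition.

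If one wanted a fully self-contained argument not invoking the multiway Cheeger inequality of \cite{lee2014multiway}, the ``only if'' direction (lower bound $\lambda_L/2 \le h_L$) is the elementary half — it follows from the variational (Courant–Fischer) characterization of $\lambda_L$ viewed as a continuous relaxation of the combinatorial problem defining $h_L(\bm{A})$, as the excerpt's footnote indicates — whereas the ``if'' direction (upper bound $h_L \le C\lambda_L^{1/2}$) is the deep half requiring the spectral-embedding/random-partitioning machinery of \cite{lee2014multiway}. Since the proposition is stated as a corollary of \eqref{cheegin}, I would simply cite \eqref{cheegin} and present the squeeze.
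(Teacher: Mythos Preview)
Your proposal is correct and takes essentially the same approach as the paper: both directions are obtained by squeezing via the two sides of the higher-order Cheeger inequality \eqref{cheegin}, using that $L$ (hence $C$) is fixed and that $\lambda_L \ge 0$. The paper's proof is in fact even terser than yours, but the logical content is identical.
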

\begin{proof}
  The ``if'' direction is immediate from both inequalities in \eqref{cheegin}. The ``only if'' direction follows from the first inequality and the fact that $\lambda_k \in [0,2]$ for all $k$.
\end{proof}

The proposition gives us a feasible way of assessing \eqref{chee}, which is to examine the magnitude of $\lambda_L$. Based on \eqref{chee} and the proposition, we make the following definition.

\begin{definition}
  A sequence of networks is {\em well clustered at $L$} if $\lambda_L \rightarrow 0$ as $n\rightarrow\infty$. It is {\em well clustered} if it is well clustered at $L$ for some $L>1$ and {\em poorly clustered} otherwise.
\end{definition}

\noindent That is, well-clustered networks can be partitioned into multiple low-conductance clusters. As we discuss in \autoref{sspectra}, there are reasonable models of network formation that produce well-clustered networks, but there are also models that produce poorly clustered networks. It is therefore important to assess whether a given network is well clustered in practice.

%---------------------------------------------
\subsection{Practical Recommendations}\label{srecs}
%---------------------------------------------

Based on our results, we make several recommendations for empirical practice. We provide a Python package for implementing them, linked in \autoref{networkinference}.

\bigskip
\noindent {\bf Conductance.} Given a candidate set of clusters $\{\C_\ell\}_{\ell=1}^L$, however it is obtained, one should compute its maximal conductance $\max_{1\leq \ell \leq L} \phi_{\bm{A}}(\C_\ell) \in [0,1]$. Our asymptotic theory shows this quantity must be small relative to $n$. Based on simulations in \autoref{smc}, we recommend using clusters with maximal conductance at most 0.1 to ensure that cluster-robust methods have adequate size control.

\bigskip
\noindent {\bf Laplacian.} To construct clusters in the first place, the first task is to select the number of clusters $L$.\footnote{By Theorem 3 of \cite{ibragimov2010t}, there is no data-dependent way to optimally select $L$ while maintaining size control uniformly in the number of clusters.}
% Their result basically says if you are only willing to assume asymptotic independence for L clusters but in reality there are L' > L such clusters, then any test with higher power when the reality is L' lacks uniform size control in that for some DGP where the reality is L, it will overreject.
Based on our simulations, we recommend choosing the largest $L$ such that $\lambda_L$ is at most $0.05$ to ensure that clusters have sufficiently small conductance.\footnote{A widely used heuristic for spectral clustering is to select $L$ such that $\lambda_L$ is ``small'' and $\lambda_{L+1}$ is ``large,'' if such an $L$ exists \citep{von2007tutorial}. This is because, by \autoref{easy}, $L$ low-conductance clusters exist but $L+1$ such clusters do not in this case. A similar heuristic is used in principal components analysis.} The reason for choosing the {\em largest} such $L$ is that power requires multiple clusters. The results in \cite{cai2022implementation} and \cite{cameron2008bootstrap} suggest at least five clusters are required for nontrivial power because, for the case of the randomization test, there only exist $2^L$ values in the randomization distribution used to construct the critical value in \eqref{sinfp}. 

\bigskip
\noindent {\bf Computing clusters.} Given $L$, the ideal set of clusters is the partition that minimizes conductance. While an exact solution is computationally infeasible, many feasible alternatives have been developed. We focus on spectral clustering in our simulation results. In \autoref{sconstruct}, we state the algorithm and discuss why it computes low-conductance clusters when they exist.

Note that we are not interested in interpreting the clusters themselves, as is usually the case when applying community detection algorithms, but rather in finding an approximate solution to the infeasible objective of minimizing conductance. One can therefore use any number and combination of algorithms and heuristics to best achieve this goal, so long as the maximal conductance is small.\footnote{Provided these algorithms only use the network data $\bm{A}$, as is the case for most community detection algorithms, our asymptotic theory remains valid since it treats $\bm{A}$ as fixed.}

For some networks, community detection algorithms return an {\em unbalanced partition} consisting of one large and several very small clusters; see \autoref{einhom} in \autoref{sspectra}. We view such networks as being close to poorly clustered. Result \eqref{convergence} implies that only large clusters (of order $n$ size) contribute to the limit distribution, so this situation is little better than having only one cluster, in which case cluster-robust methods have trivial power. {\em Consequently, unless at least five large clusters with low conductance can be found, we recommend against using cluster-robust methods.} 

An alternative is the HAC variance estimator, which is described in \autoref{smc}. The upside of HAC is that theoretically, it is consistent for the variance regardless of the number of low-conductance clusters, while the downside is that confidence intervals constructed using HAC variance estimators are often quite anti-conservative in smaller samples, as reflected in the simulations in \autoref{smc}.

\bigskip
\noindent {\bf Inference procedure.} If $L$ is fairly large, conventional cluster-robust methods such as \cite{liang1986longitudinal} clustered standard errors may be used. Otherwise, the small-$L$ cluster-robust methods cited in \autoref{sinferp} exhibit better performance. As noted by \cite{cameron2015practitioner}, there is no clear-cut definition of ``fairly large,'' but $L$ being ``small'' may range from being less than 20 to less than 50.

\begin{remark}[Weighted, directed graphs]\label{rwdg}
  Our asymptotic results rely on a CLT that only pertains to binary, undirected networks, but as discussed in \cite{kojevnikov2021bootstrap}, this can be extended to weighted networks. The definitions of conductance, the Cheeger constant, and the Laplacian immediately apply to weighted ($A_{ij}\in \R$) graphs and have been generalized to directed ($A_{ij}\neq A_{ji}$) graphs. Likewise, the Cheeger inequality applies directly to weighted graphs \citep{lee2014multiway} and has been extended to directed graphs \citep{chung2005laplacians}. Consequently, we believe that our recommendations are relevant to these types of networks.
\end{remark}

%---------------------------------------------
\subsection{Constructing Clusters}\label{sconstruct}
%---------------------------------------------

To construct clusters, ideally, we would like to find the solution to \eqref{cheegercon}, but as discussed in \autoref{sginvs}, this is not computationally feasible. This partly motivates a large, multi-disciplinary literature on network clustering algorithms. 

\cite{zacchia2020knowledge} uses a popular modularity-based algorithm due to \cite{blondel2008fast} to construct clusters. Such algorithms seek to find a partition that approximately maximizes an alternative measure of community structure known as ``modularity,'' which is given by
\begin{equation*}
  Q = \frac{1}{2m} \sum_{\ell=1}^L \sum_{i,j \in \C_\ell} \left( A_{ij} - \frac{d_id_j}{2m} \right),
\end{equation*}

\noindent where $d_i$ is $i$'s degree and $m$ is the number of links in $\bm{A}$. The sum over $A_{ij}$ alone is the number of within-community links, while the sum over $d_id_j / (2m)$ is the expected number of within-community links under a null configuration model. If $Q$ is large, it means there exists a ``surprising'' number of within-community links under this partition, which is indicative of community structure, so we seek to choose the partition that maximizes $Q$. Modularity-based algorithms are the subject of a large literature in computer science and physics \citep[e.g.][Ch.\ 9]{barabasi2015}. 

Spectral clustering algorithms are a popular method with theoretical guarantees \citep{von2007tutorial}. Given a desired number of clusters $L$, these algorithms apply $k$-means or some other clustering method to the $L$ eigenvectors of the Laplacian, a typical implementation being the following.
\begin{enumerate}
  \item Given a graph $\bm{A}$ and desired number of clusters $L$, compute the Laplacian and its eigenvalues $\lambda_1 \leq \ldots \leq \lambda_n$.

  \item Let $V_\ell$ be the eigenvector associated with $\lambda_\ell$ and $V_{\ell i}$ its $i$th component. Embed the $n$ units in $\R^L$ by associating each unit $i$ with a position
    \begin{equation*}
      X_i = \left(\frac{V_{1i}}{(\sum_{\ell=1}^L V_{\ell i}^2)^{1/2}}, \dots, \frac{V_{Li}}{(\sum_{\ell=1}^L V_{\ell i}^2)^{1/2}}\right).
    \end{equation*}

  \item Cluster the positions $\{X_i\}_{i=1}^n$ using $k$-means with $k=L$ to obtain $\C_1, \dots, \C_L$.
\end{enumerate}

\noindent As discussed in \cite{von2007tutorial}, this can be roughly interpreted as a continuous relaxation of the ideal program \eqref{cheegercon}.

There are well-known theoretical results showing that spectral clustering recovers low-conductance clusters, provided they exist. These are supported by the simulation evidence in \autoref{ssimspec}. Recall that if $\lambda_L=0$, the network consists of $L$ components, which are ``ideal'' clusters with exactly zero conductance. This intuitively suggests that if $\lambda_L$ is close to zero, then the network has $L$ low-conductance clusters. 

To see how this intuition translates to the eigenvectors, consider an ``ideal'' network consisting of $L$ components. The eigenvector $V_L$ associated with $\lambda_L$ then almost perfectly identifies the clusters because it can be written as $V_L = \bm{D}^{1/2} V_L^*$ for some $V_L^*$ with the property that $V_{Li}^*=V_{Lj}^*$ if and only if $i,j$ are in the same component \citep{peng2017partitioning}. That is, up to degree scaling due to $\bm{D}^{1/2}$, units in the same component are assigned the same value by $V_L$, whereas units in different components are assigned different values. Recovering the clusters is then a simple task for $k$-means (the normalization in the definition of $X_i$ adjusts for the degree scaling).

% In the CS294 notes (lecture 2), FOR THE CASE OF A REGULAR GRAPH, we have a proof that \lambda_k=0 iff there are at least k connected components. One part shows that \lambda_k = 0 implies the solution x to the variational problem corresponding to \lambda_k must be constant within components. This solution is necessarily an eigenvector of \lambda_k - see proof of Thm 6 of lecture00 where the eigenvector attains the max. So if we think about k=1, that eigenvector is (1, ..., 1). For k=2, the solution must be orthogonal to the solution for k=1, so we can use (1, ..., 1, -    1, ..., -1) where we split the components into two groups, one groups gets the 1s, the other gets the -1s. Similarly for k>2.
% FOR THE CASE OF AN IRREGULAR GRAPH, we have to multiply the eigenvectors by D^{1/2}, per lecture 3.

Now suppose more realistically that the network has $L$th-order Cheeger constant that is small relative to $\lambda_{L+1}$ (their ratio tends to zero). This implies $\bm{A}$ has $L$ low-conductance clusters by the Cheeger inequality and a spectral gap in that $\lambda_L/\lambda_{L+1} \rightarrow 0$.\footnote{In \autoref{ssimspec}, we apply spectral clustering to networks with small spectral gaps that are nonetheless well clustered and find that the algorithm still delivers low-conductance clusters. Hence, having a large spectral gap is sufficient for finding low-conductance clusters but apparently not necessary.} By Theorem 1.1 of \cite{peng2017partitioning}, the span of the $L$ eigenvectors of the Laplacian (corresponding to the smallest $L$ eigenvalues) is close to that of $L$ vectors of normalized indicators that identify the infeasible optimal partition that minimizes conductance. Consequently, the output of $k$-means should be close to the optimum. 

%----------------------------------------------------------------------
\section{Empirical Illustrations}\label{sapp}
%----------------------------------------------------------------------

To illustrate our results, we next revisit the analyses of two papers that cluster standard errors on subnetworks output by community detection algorithms.

\bigskip
\noindent {\bf \cite{aral2017exercise}.} This paper finds evidence of peer effects in exercise activity using data from an online social network of 1.1 million runners. The authors partition their network into 15144 clusters (average size 7.7, SD 41) using a modularity-based method. The online social network they study spans many countries, with US users comprising 20 percent of the data. They report that the giant component of their network contains 90 percent of units, so the large number of clusters they obtain is not because their network has many small components. A natural question is whether 15144 is a reasonable number of clusters.

Our results provide guidance for assessing cluster quality using the conductance measure. While the authors' network data is not publicly available, on p.\ 35 of the supplementary appendix, they write, ``on average 8 out of 10 friends are within-cluster while 2 of 10 are across clusters.'' They report this statistic based on their belief that it measures independence of clusters, and our results provide some formal justification for this belief. Their statistic is similar to conductance, being an average over a measure of conductance defined at the unit level, so we can ballpark $\max_{1\leq\ell\leq L} \phi_{\bm{A}}(\C_\ell)$ at around 0.2 in their application. 

Our results in \autoref{smc} indicate that cluster-robust methods would perform better if this were closer to 0.1. Fortunately, since the authors have 15k clusters, it is likely that they could settle for a fraction of 15k to obtain a substantial decrease in conductance while still leaving a sizeable number for power.

\bigskip
\noindent {\bf \cite{zacchia2020knowledge}.} This paper studies knowledge spillovers across firms. The author constructs a weighted, undirected network of 707 firms for each year $t$. The weighted link $A_{ij,t}$ between firms $i$ and $j$ at time $t$ measures co-patenting between firm inventors. In order to apply a community detection algorithm, which requires a static network, the author sums the networks across time, defining $A_{ij} = \sum_t A_{ij,t}$ for each $i,j$. 
%\begin{equation*}
%  A_{ij,t} = \sqrt{\frac{\# \text{ inventors of } i \text{ linked to } j \text{ at } t + \# \text{ inventors of } j \text{ linked to } i \text{ at } t}{\# \text{ inventors of } i \text{ at } t + \# \text{ inventors of } j \text{ at } t}},
%\end{equation*}

To compute the clusters, the author applies a variant of the Louvain algorithm to the giant component, which is a modularity-based method \citep{blondel2008fast}. A tuning parameter $\varphi$ controls the number of clusters, and the choice of $\varphi=0.6$ used in the paper yields 20 clusters in the giant. The author treats all units outside of the giant as a single cluster. 
% see do file 2.0.1. BSV Addition of Spillover Variables and Paper Tables.do - his paper erroneously reports 20 clusters since the number 20 is computed only using clusters from the giant. he actually has 21 clusters

Our theoretical results in \autoref{smain} only pertain to unweighted graphs. However, the graph invariants in \autoref{sginvs} are all defined for weighted graphs, as discussed in \autoref{rwdg}. We compute these quantities both for the original weighted network and the unweighted version where $A_{ij}$ is set to 1 if and only if the weight is positive. In the unweighted graph, there are 3451 links, so the network is sparse. The analysis that follows focuses on the giant component, which consists of 439 units.

\begin{figure}
  \centering
  \includegraphics[scale=0.6]{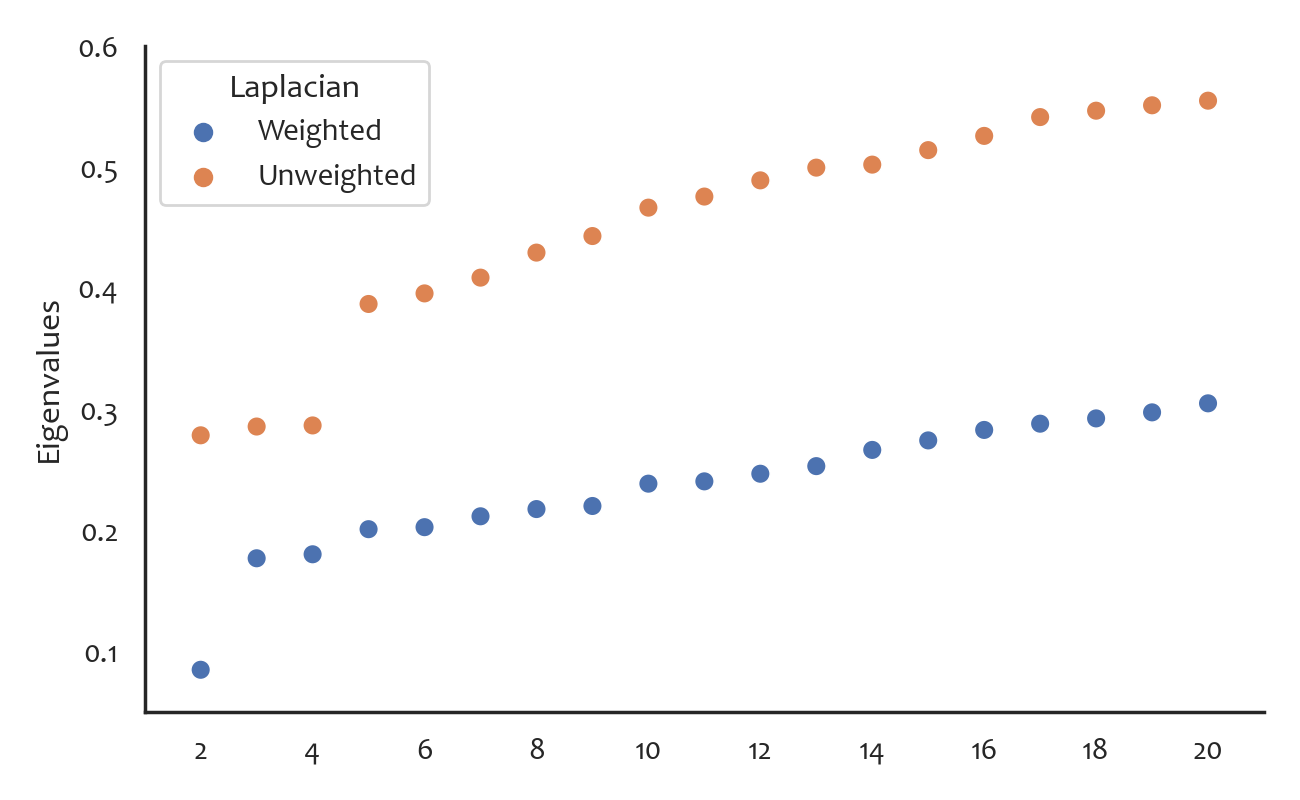}
  \caption{Scatterplot of eigenvalues $\lambda_2 \leq \ldots \leq \lambda_{20}$.}
  \label{app_spectra}
\end{figure}

\begin{figure}
  \centering
  \includegraphics[scale=0.5]{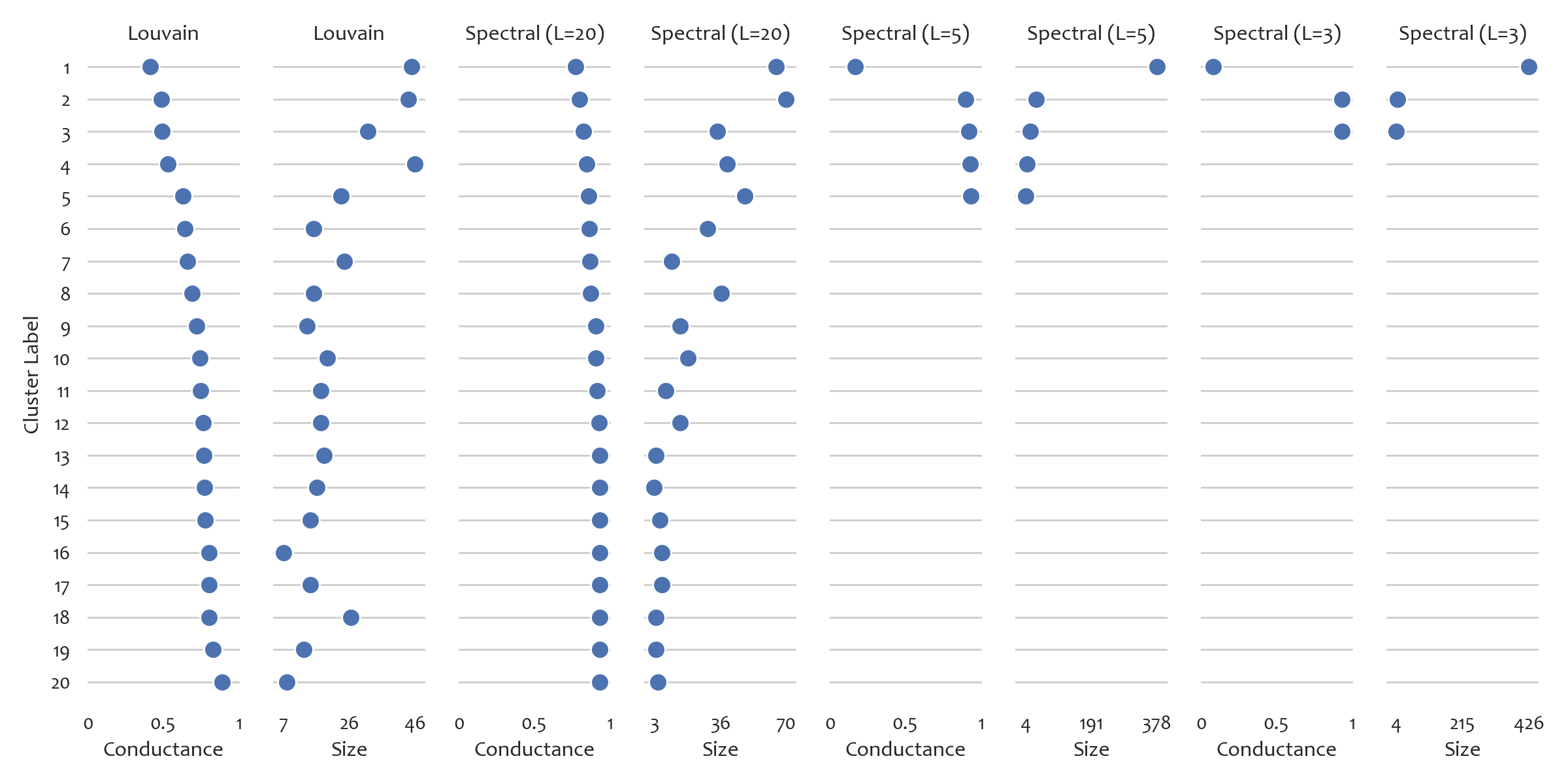}
  \caption{Conductances and cluster sizes for the weighted network.}
  \label{weighted}
\end{figure}

\autoref{app_spectra} plots the spectra of the Laplacians of both the weighted and unweighted networks, starting at $\lambda_2$ (since $\lambda_1=0$). The networks exhibit spectral gaps at 2 and 4. However, only one eigenvalue is below 0.1 in the weighted network and none are in the unweighted network, so they appear to be poorly clustered.

This is further confirmed in \autoref{weighted}. The first two columns plot the conductances and sizes of the 20 clusters used in \cite{zacchia2020knowledge} for the weighted network, and the remaining columns plot the same quantities for clusters obtained via spectral clustering, for different values of $L$. The corresponding figure for the unweighted network is essentially the same and therefore omitted. The figure shows that the Louvain algorithm for $\varphi=0.6$ and the spectral clustering algorithm for $L=20$ both yield high-conductance clusters. Choosing smaller values of $L$ for spectral clustering does not improve matters, as only a single cluster with somewhat low conductance emerges, but this contains the vast majority of units in the network. Thus, the partition is highly unbalanced, which, as discussed in \autoref{srecs}, means the power of the test would be little better than if we had only one cluster. 

Our results in \autoref{smc} show that cluster-robust methods can suffer from severe size distortion when networks are poorly clustered. Thus, for this dataset, it may be preferable to use an alternative inference procedure.

%----------------------------------------------------------------------
\section{Asymptotic Theory}\label{smain}
%----------------------------------------------------------------------

We consider a sequence of networks and associated clusters, both implicitly indexed by the network size $n$. Recall that $n_\ell = \abs{\C_\ell}$, the size of cluster $\ell$.

\begin{assump}[Limit Sequence]\label{aseq}
  (a) The number of clusters $L$ is fixed as $n\rightarrow\infty$. (b) For any $\ell=1,\dots,L$, $n_\ell/n \rightarrow \rho_\ell \in [0,1]$. 
\end{assump}

\noindent We consider a small number of clusters in part (a) because this appears to be typical of networks, as is the case in \autoref{sspectra}. Part (b) defines $\rho_\ell$ as the asymptotic fraction of units in $\C_\ell$, allowing for the possibility that the cluster has trivial size ($\rho_\ell=0$).

\cite{kojevnikov2021limit} provide a formal notion of weak network dependence called $\psi$-{\em dependence}, which we employ in what follows. This is analogous to familiar notions of spatial or temporal weak dependence, except the distance between observations is measured using path distance. Their paper and \cite{leung2020causal} verify $\psi$-dependence for a variety of network models used in practice.

Weak dependence simply means the correlation between two sets of observations decays as the network distance between the sets grows. Formalizing this requires some notational overhead. For any $H,H' \subseteq \mathcal{N}_n$, define $G_H = (g(W_i,\theta_0))_{i \in H}$ and $\ell_{\bm{A}}(H,H') = \min\{\ell_{\bm{A}}(i,j)\colon i \in H, j \in H'\}$, the distance between the two subsets. Let $\mathcal{L}_d$ be the set of bounded $\R$-valued Lipschitz functions on $\R^d$, $\norm{f}_\infty = \sup_x \abs{f(x)}$, $\text{Lip}(f)$ the Lipschitz constant of $f \in \mathcal{L}_d$, and
\begin{equation*}
  \mathcal{P}_n(h,h';s) = \left\{ (H,H')\colon H,H' \subseteq \mathcal{N}_n, |H|=h, |H'|=h', \ell_{\bm{A}}(H,H') \geq s \right\},
\end{equation*}

\noindent the set of subset pairs $H,H'$ with respective sizes $h,h'$ that are at least distance $s$ apart in the network. Define $i$'s $s$-neighborhood boundary $\N_{\bm{A}}^\partial(i,s) = \{j\in\N_n\colon \ell_{\bm{A}}(i,j) = s\}$ and its $k$th moment $\delta_n^\partial(s; k) = n^{-1} \sum_{i=1}^n \abs{\N_{\bm{A}}^\partial(i,s)}^k$. Finally, let
\begin{align*}
  &\Delta_n(s, m; k) = \frac{1}{n} \sum_{i=1}^n \max_{j\in\N_{\bm{A}}^\partial(i,s)} \abs{\N_{\bm{A}}(i,m) \backslash \N_{\bm{A}}(j,s-1)}^k \quad\text{and} \\ 
  &c_n(s, m; k) = \inf_{\alpha>1} \Delta_n(s, m; k\alpha)^{1/\alpha} \delta_n^\partial(s; \alpha/(\alpha-1))^{1-1/\alpha}. 
\end{align*}

\begin{assump}[Weak Network Dependence I]\label{apsi} \hfill
  \begin{enumerate}[(a)]
    \item There exist a constant $C>0$ and uniformly bounded constants $\{\psi_n(s)\}_{s, n \in \mathbb{N}}$ with $\psi_n(0)=1$ for all $n$ such that $\sup_n \psi_n(s) \rightarrow 0$ as $s\rightarrow\infty$ and
      \begin{equation*}
	\abs{\cov(f(G_H), f'(G_{H'}))} \leq C h h' (\norm{f}_\infty + \text{Lip}(f))(\norm{f'}_\infty + \text{Lip}(f')) \psi_n(s)
      \end{equation*}

      \noindent for all $n,h,h' \in \mathbb{N}$; $s>0$; $f \in \mathcal{L}_{d_g h}$; $f' \in \mathcal{L}_{d_g h'}$; and $(H,H') \in \mathcal{P}_n(h,h';s)$.

    \item There exist $p>4$ and a sequence $m_n\rightarrow\infty$ such that, for any $k \in \{1,2\}$, 
      \begin{equation*}
	\frac{1}{n^{k/2}} \sum_{s=0}^\infty c_n(s, m_n; k) \psi_n(s)^{1-(2+k)/p} \rightarrow 0 \quad\text{and}\quad n^{3/2} \psi_n(m_n)^{1-1/p} \rightarrow 0. 
      \end{equation*}
  \end{enumerate}
\end{assump}

\noindent This imposes weak network dependence on the set of moments.\footnote{It can be verified given an analogous weak network dependence condition imposed on the data $\{W_i\}_{i=1}^n$ and smoothness conditions on $g(\cdot)$ \citep[][Appendix A.1]{kojevnikov2021limit}.} Part (a) encodes the definition of $\psi$-dependence and Assumption 2.1 of \cite{kojevnikov2021limit}. Part (b) is Condition ND of the same reference and can be verified using arguments in \S A.1 of \cite{leung2020causal}. The key quantity is $\psi_n(s)$, which essentially bounds the correlation between sets of observations at distance $s$ and is required to decay to zero as $s$ diverges. \cite{leung2020causal} provides conditions under which $\psi_n(s)$ is uniformly $O(e^{-c\,s})$ for some $c>0$ in well-known social interactions models.

Part (b) is analogous to mixing conditions for spatial data that require the mixing coefficient to decay sufficiently quickly. For network data, this is necessarily more complicated to state because the metric space is non-Euclidean. As discussed in \cite{leung2020causal}, whereas the number of units in a ball of radius $s$ grows at most polynomially with $s$ in Euclidean space, it can grow exponentially in a network. Part (b) requires $\psi_n(s)$ to decay sufficiently quickly relative to the growth rates of $s$-neighborhood boundaries (analogs of balls in Euclidean space), as measured by $c_n(s, m_n; k)$. This is conceptually the same requirement underlying spatial mixing conditions.

\begin{assump}[Regularity]\label{areg}
  (a) $\cov(\sqrt{n_\ell} \hat{G}_\ell(\theta_0), \sqrt{n_m} \hat{G}_m(\theta_0)) \rightarrow \bm{\Sigma}_{\ell m}$, a finite matrix, for any $\ell,m = 1,\dots,L$, and $\bm{\Sigma}_{\ell\ell}$ is positive definite for any $\ell$. (b) For $p$ in \autoref{apsi}(b), $\sup_{n\in\mathbb{N}} \max_{i\in\mathcal{N}_n} \E[\norm{g(W_i,\theta_0)}^p] < \infty$.
\end{assump}

\begin{proposition}\label{basic}
  Under Assumptions \ref{aseq}--\ref{areg}, \eqref{convergence} holds. 
\end{proposition}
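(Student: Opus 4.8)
The plan is to establish \eqref{convergence} by combining a central limit theorem for $\psi$-dependent network data (applied to a scalar linear combination via the Cramér–Wold device) with a variance computation that identifies the limiting covariance matrix as $\bm{\Sigma}^*$. First I would stack the cluster-level moments into the single vector $\hat{G}(\theta_0) = (n_1\hat{G}_1(\theta_0)', \dots, n_L\hat{G}_L(\theta_0)')' = \big(\sum_{i\in\C_1} g(W_i,\theta_0)', \dots, \sum_{i\in\C_L} g(W_i,\theta_0)'\big)'$, and observe that $n^{-1/2}\hat{G}(\theta_0)$ has mean zero by the moment condition $\E[g(W_i,\theta_0)] = \zero$. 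By the Cramér–Wold device, it suffices to show that for every fixed vector $\lambda = (\lambda_1', \dots, \lambda_L')' \in \R^{d_g L}$, the scalar sum $n^{-1/2} \sum_{\ell=1}^L \lambda_\ell' \sum_{i\in\C_\ell} g(W_i,\theta_0) = n^{-1/2}\sum_{i=1}^n \tilde{g}_i$ is asymptotically normal, where $\tilde{g}_i = \lambda_{\ell(i)}' g(W_i,\theta_0)$ and $\ell(i)$ denotes the unique cluster containing $i$.

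The second step is to verify that $\{\tilde{g}_i\}_{i=1}^n$ inherits $\psi$-dependence from $\{g(W_i,\theta_0)\}_{i=1}^n$ with the same $\psi_n(s)$ (up to a constant depending on $\max_\ell \norm{\lambda_\ell}$), since the map $x \mapsto \lambda_{\ell(i)}'x$ is linear hence Lipschitz with bounded Lipschitz constant; note that because the clustering depends only on the fixed network $\bm{A}$, no additional randomness is introduced. The moment bound in \autoref{areg}(b) transfers as well: $\sup_n \max_i \E[\abs{\tilde{g}_i}^p]^{1/p} < \infty$. With these verified, I would invoke the network CLT of \cite{kojevnikov2019limit} (their Theorem 3.2, whose hypotheses are exactly Assumptions 2.1 and 3.4 encoded in our \autoref{apsi}), which gives $n^{-1/2}\sum_{i=1}^n(\tilde{g}_i - \E[\tilde{g}_i]) \dlimarrow \mathcal{N}(0, \sigma_\lambda^2)$ provided the limiting variance $\sigma_\lambda^2 = \lim_n \var(n^{-1/2}\sum_i \tilde{g}_i)$ exists; in our case $\E[\tilde g_i]=0$.

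The third step identifies $\sigma_\lambda^2$ with $\lambda' \bm{\Sigma}^* \lambda$. Writing $\var(n^{-1/2}\sum_i \tilde{g}_i) = n^{-1}\sum_{\ell,m} \lambda_\ell' \cov\!\big(\sum_{i\in\C_\ell} g(W_i,\theta_0), \sum_{j\in\C_m} g(W_j,\theta_0)\big)\lambda_m$, and using $\sum_{i\in\C_\ell} g(W_i,\theta_0) = n_\ell \hat{G}_\ell(\theta_0) = \sqrt{n_\ell}\cdot\sqrt{n_\ell}\hat{G}_\ell(\theta_0)$, the $(\ell,m)$ term equals $(n_\ell n_m/n)^{1/2}(n_\ell/n_m)^{0}\cdots$ — more precisely $n^{-1}\sqrt{n_\ell n_m}\,\cov(\sqrt{n_\ell}\hat{G}_\ell(\theta_0), \sqrt{n_m}\hat{G}_m(\theta_0))$, which by \autoref{aseq}(b) and \autoref{areg}(a) converges to $\sqrt{\rho_\ell\rho_m}\,\bm{\Sigma}_{\ell m}$. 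Summing over $\ell,m$ yields exactly $\lambda'\bm{\Sigma}^*\lambda$ with $\bm{\Sigma}^*$ as displayed, and positive semidefiniteness of $\bm{\Sigma}^*$ follows since it is a limit of covariance matrices. Since $\lambda$ was arbitrary, Cramér–Wold delivers \eqref{convergence}.

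The main obstacle is the bookkeeping in the CLT invocation: one must confirm that the transformed array $\{\tilde{g}_i\}$ satisfies the CLT hypotheses uniformly — in particular that the joint-dependence and neighborhood-growth conditions in \eqref{3s}, which are stated for $g(W_i,\theta_0)$, are genuinely about $\bm{A}$ and $\psi_n$ alone and so are untouched by the linear reweighting — and that the existence of the scalar limiting variance $\sigma_\lambda^2$ is guaranteed by \autoref{areg}(a). A minor subtlety worth a sentence: clusters with $\rho_\ell = 0$ contribute negligibly, so the corresponding rows and columns of $\bm{\Sigma}^*$ vanish, consistent with the stated form; this needs only the Cauchy–Schwarz bound $\norm{\bm{\Sigma}_{\ell m}} \leq \norm{\bm{\Sigma}_\ell}^{1/2}\norm{\bm{\Sigma}_m}^{1/2}$ together with $\sqrt{\rho_\ell\rho_m}\to 0$. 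Everything else is routine, as the proposition is billed as "elementary."
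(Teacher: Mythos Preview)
Your approach --- Cram\'er--Wold plus the \cite{kojevnikov2019limit} CLT applied to $\tilde g_i = \lambda_{\ell(i)}'g(W_i,\theta_0)$ --- is exactly the paper's. One refinement to incorporate: contrary to your remark that the conditions in \eqref{3s} are ``genuinely about $\bm{A}$ and $\psi_n$ alone,'' Kojevnikov's Assumption~3.4 actually divides those quantities by powers of the standard deviation of the sum, so bridging from \autoref{apsi}(b) to the CLT requires the limiting variance $\sigma_\lambda^2$ to be strictly positive. The paper handles this by first restricting Cram\'er--Wold to the clusters with $\rho_\ell>0$ (where positive-definiteness in \autoref{areg}(a) guarantees $\sigma_\lambda^2>0$ for nonzero $\lambda$) and then separately arguing $n^{-1/2}n_\ell\hat G_\ell(\theta_0)\plimarrow\zero$ for trivial clusters; your variance computation would also dispatch the degenerate case via Chebyshev, but that step should be made explicit rather than folded into the ``minor subtlety'' about the shape of $\bm{\Sigma}^*$.
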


\noindent The proof of the proposition and all other results in this section are stated in \autoref{sproofs}. Under additional standard regularity conditions, we can use \autoref{basic} to establish asymptotic normality of $(\sqrt{n}(\hat\theta_\ell-\theta_0))_{\ell=1}^L$, but since this type of result is well-known, we omit these conditions and the corresponding result. 

We next state conditions required for our first main result, which establishes that the off-diagonal blocks of $\bm{\Sigma}^*$ in \eqref{convergence} are zero. Combined with \autoref{basic}, this verifies the key high-level condition required by small-$L$ cluster-robust inference methods. Let $\delta(\bm{A}) = n^{-1} \sum_{i=1}^n \sum_{j=1}^n A_{ij}$, the average degree. 

\begin{assump}[Conductance]\label{acond}
  $\max_{1\leq \ell \leq L} \phi_{\bm{A}}(\C_\ell) \cdot \delta(\bm{A}) \rightarrow 0$.
\end{assump}

\noindent This requires the conductance of all clusters to vanish as $n$ grows since $\delta(\bm{A})$ is generally bounded away from zero (otherwise, the network would be empty in the limit). In dense networks, $\delta(\bm{A})$ diverges, in which case the assumption requires the maximal conductance to shrink to zero faster. However, with nontrival network dependence, \autoref{apsi} generally requires sparsity in the sense $\delta(\bm{A}) = O(1)$.

Define $\mathcal{N}_{\bm{A}}(i,s) = \{j\in\mathcal{N}_n\colon \ell_{\bm{A}}(i,j) \leq s\}$, the {\em $s$-neighborhood} of unit $i$, and $M_n(s,k) = n^{-1} \sum_{i=1}^n \abs{\mathcal{N}_{\bm{A}}(i,s)}^k$, the $k$th moment of the $s$-neighborhood size.

\begin{assump}[Weak Network Dependence II]\label{apsi2}
  For $p$ in \autoref{apsi}(b) and some $\epsilon>0$, $\sum_{s=1}^n s\, M_n(s,2(1+\epsilon))^{1/(1+\epsilon)} \psi_n(s)^{1-2/p} = O(1)$.
\end{assump}

\noindent This is conceptually the same as \autoref{apsi}(b), requiring dependence to decay quickly enough relative to the growth rate of $s$-neighborhood sizes. We verify the condition in \autoref{sveri}. Note that the validity of the HAC estimator \citep[][Proposition 4.1]{kojevnikov2021limit} does not require Assumptions \ref{acond} or \ref{apsi2}; instead, it requires conditions relating the bandwidth and kernel to the network topology.

\begin{theorem}[Sufficiency]\label{sufficiency}
  Under Assumptions \ref{aseq}--\ref{apsi2}, $\sqrt{\rho_\ell\rho_m}\,\bm{\Sigma}_{\ell m}=\zero$ for all $\ell\neq m$ (whether or not $\rho_\ell$ and $\rho_m$ are nonzero).
\end{theorem}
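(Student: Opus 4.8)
The plan is to reduce the statement to a bound on cross-cluster covariance sums and then extract that bound from the low conductance of the clusters by a short combinatorial argument; throughout I work entrywise, so that ``$\norm{\cdot}$'' below may be read as any fixed matrix norm. Fix $\ell\neq m$. If $\rho_\ell=0$ or $\rho_m=0$, then $\sqrt{\rho_\ell\rho_m}\,\bm{\Sigma}_{\ell m}=\zero$ is immediate from finiteness of $\bm{\Sigma}_{\ell m}$ (\autoref{areg}(a)), so assume $\rho_\ell,\rho_m>0$. Since $\sqrt{n_\ell}\hat{G}_\ell(\theta_0)=n_\ell^{-1/2}\sum_{i\in\C_\ell}g(W_i,\theta_0)$, bilinearity of covariance and \autoref{aseq} give
\[
  \sqrt{\rho_\ell\rho_m}\,\cov\bigl(\sqrt{n_\ell}\hat{G}_\ell(\theta_0),\sqrt{n_m}\hat{G}_m(\theta_0)\bigr)=\frac{1+o(1)}{n}\sum_{i\in\C_\ell}\sum_{j\in\C_m}\cov\bigl(g(W_i,\theta_0),g(W_j,\theta_0)\bigr),
\]
so it suffices to show that $n^{-1}\sum_{i\in\C_\ell}\sum_{j\in\C_m}\norm{\cov(g(W_i,\theta_0),g(W_j,\theta_0))}\to0$. (Clusters containing no links consist of isolated units, which are uncorrelated with everything under \autoref{apsi}(a), so these may be dropped at the outset and every remaining cluster has a well-defined conductance.)

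Two ingredients are needed. The first is a network covariance inequality: applying the Lipschitz-function bound of \autoref{apsi}(a) coordinatewise, with the singleton index sets $H=\{i\}$, $H'=\{j\}$ (so $s=\ell_{\bm{A}}(i,j)$) and with $f,f'$ the scalar clippings of the relevant coordinates of $g$ at a level $K$ (each $1$-Lipschitz with sup-norm $K$), and combining it with a truncation estimate controlled by the uniform $p$th-moment bound \autoref{areg}(b), gives a bound of the form $C_1K^2\psi_n(\ell_{\bm{A}}(i,j))+C_2K^{-(p-2)}$; optimizing over $K$ yields a constant $C$, independent of $n,i,j$, with
\[
  \norm{\cov\bigl(g(W_i,\theta_0),g(W_j,\theta_0)\bigr)}\leq C\,\psi_n\bigl(\ell_{\bm{A}}(i,j)\bigr)^{1-2/p}\qquad(i\neq j),
\]
the right-hand side being $0$ when $i,j$ lie in different components (the bound then holds with $\psi_n(s)$ for every $s$). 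This is the network analogue of the classical covariance inequality for mixing processes.

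The second ingredient—the crux—is the estimate $\sum_{i\in\C_\ell}\sum_{j\in\C_m}\psi_n(\ell_{\bm{A}}(i,j))^{1-2/p}=o(n)$. Let $B_\ell$ denote the set of units in $\C_\ell$ having at least one neighbor outside $\C_\ell$. Because $m\neq\ell$, any shortest path from $i\in\C_\ell$ to $j\in\C_m$ must leave $\C_\ell$; letting $a$ be the vertex of such a path just before its first edge out of $\C_\ell$, one has $a\in B_\ell$, and since subpaths of shortest paths are shortest, $\ell_{\bm{A}}(a,i)<\ell_{\bm{A}}(i,j)$ and $\ell_{\bm{A}}(a,j)\leq\ell_{\bm{A}}(i,j)$, so both $i$ and $j$ lie in $\mathcal{N}_{\bm{A}}(a,s)$ for $s=\ell_{\bm{A}}(i,j)$. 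Bounding the number of pairs at distance $s$ lying inside $\mathcal{N}_{\bm{A}}(a,s)$ by $\abs{\mathcal{N}_{\bm{A}}(a,s)}^2$ and charging each pair to one such witness $a$,
\[
  \sum_{i\in\C_\ell}\sum_{j\in\C_m}\psi_n\bigl(\ell_{\bm{A}}(i,j)\bigr)^{1-2/p}\ \leq\ \sum_{a\in B_\ell}\ \sum_{s=1}^{n}\psi_n(s)^{1-2/p}\,\abs{\mathcal{N}_{\bm{A}}(a,s)}^2.
\]
Applying H\"{o}lder's inequality over $a\in B_\ell$ with exponents $(1+\epsilon)/\epsilon$ and $1+\epsilon$, then enlarging the inner sum from $B_\ell$ to all of $\mathcal{N}_n$ and using $\sum_{i\in\mathcal{N}_n}\abs{\mathcal{N}_{\bm{A}}(i,s)}^{2(1+\epsilon)}=n\,M_n(s,2(1+\epsilon))$, replaces $\sum_{a\in B_\ell}\abs{\mathcal{N}_{\bm{A}}(a,s)}^2$ by at most $\abs{B_\ell}^{\epsilon/(1+\epsilon)}(n\,M_n(s,2(1+\epsilon)))^{1/(1+\epsilon)}$; summing over $s$ and invoking \autoref{apsi2} (whose summand dominates ours by a factor $s\geq1$) bounds the whole expression by $O\bigl(\abs{B_\ell}^{\epsilon/(1+\epsilon)}\,n^{1/(1+\epsilon)}\bigr)$.

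It remains to bring in conductance: since every vertex of $B_\ell$ is incident to a boundary edge, $\abs{B_\ell}\leq\abs{\partial_{\bm{A}}(\C_\ell)}=\phi_{\bm{A}}(\C_\ell)\,\text{vol}_{\bm{A}}(\C_\ell)\leq\phi_{\bm{A}}(\C_\ell)\,\text{vol}_{\bm{A}}(\mathcal{N}_n)=\phi_{\bm{A}}(\C_\ell)\,\delta(\bm{A})\,n$, so $\abs{B_\ell}/n\to0$ by \autoref{acond}. Combining the last two displays with the covariance inequality yields $n^{-1}\sum_{i\in\C_\ell}\sum_{j\in\C_m}\norm{\cov(g(W_i,\theta_0),g(W_j,\theta_0))}=O\bigl((\abs{B_\ell}/n)^{\epsilon/(1+\epsilon)}\bigr)\to0$, hence $\sqrt{\rho_\ell\rho_m}\,\bm{\Sigma}_{\ell m}=\zero$. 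I expect the main obstacle to be the crux step: choosing the charging scheme (charging each cross-cluster pair at the scale $s=\ell_{\bm{A}}(i,j)$ to a boundary vertex) and the H\"{o}lder exponent $2(1+\epsilon)$ so that the neighborhood-growth sum that appears is precisely the one tamed by \autoref{apsi2}; the moment-limit bookkeeping of the first paragraph and the truncation inequality of the second are routine.
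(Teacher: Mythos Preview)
Your proposal is correct and follows essentially the same route as the paper: reduce to the cross-cluster covariance sum, apply the $\psi$-dependence covariance inequality to get the factor $\psi_n(s)^{1-2/p}$, charge each cross-cluster pair at distance $s$ to a boundary vertex of $\C_\ell$, apply H\"{o}lder with exponent $1+\epsilon$ to separate the boundary count from the neighborhood moment $M_n(s,2(1+\epsilon))$, and finish with $\abs{B_\ell}\leq\abs{\partial_{\bm{A}}(\C_\ell)}\leq\phi_{\bm{A}}(\C_\ell)\,\delta(\bm{A})\,n$.

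The one substantive difference is in the charging step. The paper splits the shortest $i$--$j$ path at the boundary vertex $k$ into pieces of lengths $d$ and $s-d$, bounds the number of $(i,j)$ by $\abs{\mathcal{N}_{\bm{A}}(k,d)}\cdot\abs{\mathcal{N}_{\bm{A}}(k,s-d)}$, and then sums over $d=0,\dots,s-1$; this is what produces the factor $s$ in \autoref{apsi2}. You instead observe that both $i$ and $j$ lie in $\mathcal{N}_{\bm{A}}(a,s)$ and bound directly by $\abs{\mathcal{N}_{\bm{A}}(a,s)}^2$, avoiding the sum over $d$. Your version is a touch simpler and, as you note, actually uses less than \autoref{apsi2} (the bound $\sum_{s}M_n(s,2(1+\epsilon))^{1/(1+\epsilon)}\psi_n(s)^{1-2/p}=O(1)$ suffices, without the extra $s$). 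The paper's finer decomposition would matter if one wanted a sharper dependence on the location of the boundary vertex along the path, but for the present statement your coarser charging loses nothing.
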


\noindent This justifies the use of cluster-robust methods for network data. Lemma 1 of \cite{bester2011inference} is the analogous result for spatial data. Their proof makes extensive use of properties of Euclidean space, whereas our setting is non-Euclidean. 

\begin{theorem}[Necessity]\label{necessity}
  There exists a data-generating process for $\{g(W_i,\theta_0)\}_{i=1}^n$ and sequence of sparse networks ($\delta(\bm{A}) = O(1)$) satisfying Assumptions \ref{apsi}, \ref{areg}, and \ref{apsi2} such that, for any sequence of clusters $\{\C_\ell\}_{\ell=1}^L$ that satisfies $\min_\ell \rho_\ell > 0$ (nontrivial cluster sizes), $\liminf_{n\rightarrow\infty} \min_\ell n_\ell^{-1} \sum_{i\in\C_\ell} \sum_{j=1}^n A_{ij} >0$ (nonempty subnetworks), and \autoref{aseq}, yet fails to satisfy \autoref{acond}, we have $\sqrt{\rho_\ell\rho_m}\,\bm{\Sigma}_{\ell m} \neq \zero$ for some $\ell,m = 1, \dots, L$ with $\ell\neq m$.
\end{theorem}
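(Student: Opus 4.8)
The plan is to exhibit a single, explicit counterexample: a sequence of sparse networks together with a data-generating process for the moments $\{g(W_i,\theta_0)\}_{i=1}^n$, and to show that no connected, nontrivially-sized partition into $L$ clusters can both fail \autoref{acond} and yield vanishing off-diagonal blocks. The natural construction is to build the network out of $L$ ``bulk'' pieces of order-$n$ size, joined by a set of cross-cluster edges whose count is of the same order as the volume of the clusters they touch (so that conductance does \emph{not} vanish), while keeping the maximum degree bounded so that $\delta(\bm{A})=O(1)$ and \autoref{apsi}/\autoref{apsi2} hold. Concretely, I would take each bulk piece to be a bounded-degree expander (or even just a long cycle or a union of bounded-degree trees — anything sparse with the right neighborhood-growth and mixing properties), and then add $\Theta(n)$ disjoint ``bridge'' edges directly connecting units of cluster $\ell$ to units of cluster $m$. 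For the data, let $g(W_i,\theta_0)$ be a scalar that is i.i.d.\ across units \emph{except} that the two endpoints of each bridge edge share a common shock: e.g.\ $g(W_i,\theta_0)=\eta_i$ for non-bridge units and $g(W_i,\theta_0)=\zeta_e$ for the two endpoints of bridge $e$, with $\eta_i,\zeta_e$ all mean-zero, unit-variance, independent. Then $W_i$ and $W_j$ are independent whenever $\ell_{\bm{A}}(i,j)\geq 2$, which makes verification of $\psi$-dependence with $\psi_n(s)=\ind\{s\le 1\}$ (hence any exponential bound) trivial, and \autoref{areg} is immediate from unit variances and boundedness.

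The core computation is then $\cov(\sqrt{n_\ell}\hat G_\ell(\theta_0),\sqrt{n_m}\hat G_m(\theta_0)) = \frac{1}{\sqrt{n_\ell n_m}}\sum_{i\in\C_\ell}\sum_{j\in\C_m}\cov(g(W_i,\theta_0),g(W_j,\theta_0))$. By the construction, the only nonzero terms come from bridge edges with one endpoint in $\C_\ell$ and one in $\C_m$, each contributing $\var(\zeta_e)=1$; so the covariance equals (number of such bridges)$/\sqrt{n_\ell n_m}$. If \autoref{acond} fails for this network — equivalently, since $\delta(\bm{A})=O(1)$, the maximal conductance $\max_\ell\phi_{\bm{A}}(\C_\ell)$ stays bounded away from zero — I would argue that some pair $\ell\neq m$ must be joined by $\Theta(n)$ bridge edges, so that entry of the covariance matrix is bounded away from zero along a subsequence, giving $\bm{\Sigma}_{\ell m}\neq\zero$ (and $\sqrt{\rho_\ell\rho_m}\bm{\Sigma}_{\ell m}\neq\zero$ since $\min_\ell\rho_\ell>0$). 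A clean way to force this: design the network so that \emph{every} connected partition into $L$ nontrivial pieces is ``close'' to the intended one — e.g.\ make the $L$ bulk pieces genuine expanders so that any low-conductance cut must essentially respect them (by the expander mixing / isoperimetry of each piece, a connected cluster cannot straddle two bulk pieces without incurring conductance bounded below), and then show the edge boundary of any such cluster is dominated by the bridge edges, whose count is $\Theta(n)$ by design. Since the boundary edges of $\C_\ell$ are exactly the bridges leaving $\C_\ell$, and there are only $\binom{L}{2}$ cluster pairs, at least one pair absorbs $\Theta(n)$ bridges, which is the desired conclusion.

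The main obstacle is the \emph{universality over all admissible partitions}: the statement quantifies over every sequence of clusters satisfying \autoref{aseq}, $\min_\ell\rho_\ell>0$, and connectedness, so it is not enough that the ``natural'' partition fails \autoref{acond} — I must ensure that \emph{any} partition failing \autoref{acond} produces a non-vanishing off-diagonal block. This is where the expander structure of the bulk pieces earns its keep: the higher-order Cheeger-type lower bound (equivalently, the isoperimetric inequality for each expander) guarantees that a connected cluster which is not contained in, nor contains, a bulk piece has conductance bounded below \emph{regardless of} the bridges, so the only way for \autoref{acond} to fail is for a cluster to have $\Theta(n)$ bridge edges on its boundary — and every bridge edge that lies on the boundary of $\C_\ell$ also lies on the boundary of exactly one other cluster $\C_m$, contributing $+1$ to $\cov(\sqrt{n_\ell}\hat G_\ell,\sqrt{n_m}\hat G_m)\cdot\sqrt{n_\ell n_m}$. (One technical point to handle carefully: the bridge endpoints have degree one larger than their bulk-graph degree, but since only $O(1)$ extra incident edges appear per unit this does not disturb sparsity, $\psi$-dependence, or the conductance bookkeeping.) The remaining steps — checking \eqref{3s} and \autoref{apsi2} for the chosen $m_n=O(\log n)$, and checking \autoref{areg} — are routine given the bounded-degree, finite-range structure, so I would relegate them to a lemma citing Appendix A of \cite{leung2020causal} for the neighborhood-growth bounds of expanders.
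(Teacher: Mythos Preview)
Your construction has a real gap precisely at the step you flag as the ``main obstacle.'' You place covariance \emph{only} on bridge edges, so for the off-diagonal block $\bm{\Sigma}_{\ell m}$ to be nonzero you need $\Theta(n)$ bridges crossing the $\ell$--$m$ boundary. But a partition can fail \autoref{acond} without any bridges on its boundary at all: take $\C_1$ to be a large connected subset of one expander piece that avoids most bridge endpoints (this is possible if bridge endpoints are not carefully placed). By expander isoperimetry the edge boundary of $\C_1$ is $\Theta(n)$, so conductance is bounded away from zero --- yet those boundary edges are expander-internal edges carrying zero covariance in your DGP, and the cross-cluster covariance is $o(1)$. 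Your expander argument is aimed at the wrong target: it shows that \emph{low}-conductance partitions must respect the bulk pieces, but the theorem concerns \emph{high}-conductance partitions, which can arise for an entirely different reason (cutting through an expander) that your DGP does not detect.

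The paper avoids this altogether with a much simpler idea: put positive covariance on \emph{every} edge, not just distinguished bridges. Concretely, take any bounded-degree network and set $\cov(g(W_i,\theta_0),g(W_j,\theta_0)) = \gamma' A_{ij}$ for $i\neq j$ with $\gamma'>0$ (this is $\psi$-dependent with $\psi_n(s)=\ind\{s\le 1\}$, so Assumptions \ref{apsi}, \ref{areg}, \ref{apsi2} are immediate). Then for \emph{any} partition, $n^{-1}\cov(n_\ell\hat G_\ell,n_m\hat G_m) = \gamma' n^{-1}\sum_{i\in\C_\ell}\sum_{j\in\C_m}A_{ij}$, which is the cross-edge count over $n$. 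If \autoref{acond} fails and $\delta(\bm{A})=O(1)$, some $\phi_{\bm{A}}(\C_\ell)$ is bounded below; pigeonhole over the $L-1$ other clusters gives some $m$ with $\sum_{i\in\C_\ell}\sum_{j\in\C_m}A_{ij}/\text{vol}_{\bm{A}}(\C_\ell)$ bounded below; and since $\text{vol}_{\bm{A}}(\C_\ell)/n = (n_\ell/n)\cdot(\text{avg degree in }\C_\ell)\geq \rho_\ell\cdot 1$ by connectedness, the covariance is bounded away from zero. No special network topology, no expander alignment argument --- the universality over partitions that you worried about falls out for free once every edge contributes.
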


\noindent The proof constructs a simple $\psi$-dependent process for which $\bm{\Sigma}^*$ is not block-diagonal. Since the process satisfies Assumptions \ref{aseq}--\ref{areg} and \ref{apsi2} but not \autoref{acond}, this establishes the latter's necessity under weak conditions on the clusters.

%----------------------------------------------------------------------
\section{Spectra of Geometric and Random Graphs}\label{sspectra}
%----------------------------------------------------------------------

By \autoref{necessity}, \eqref{conductance} is necessary for cluster-robust inference to be valid. As discussed in \autoref{ssum}, a necessary condition for \eqref{conductance} is that the network is well clustered, meaning for some $L$, $\lambda_L \rightarrow 0$, or equivalently, $h_L(\bm{A}) \rightarrow 0$ by \autoref{easy}. This section shows that, unfortunately, not all networks are well clustered, which motivates the recommendations in \autoref{srecs}. We first discuss results from geometry and random graph theory on the spectra and Cheeger constants of various graphs. We then provide simulation evidence supporting the theory and clarifying aspects that are, to our knowledge, incomplete.

%---------------------------------------------
\subsection{Theoretical Results}
%---------------------------------------------

The first two examples are well-clustered graphs. 

\begin{example}
  A planar graph is a graph that can be drawn on the plane such that links do not cross. \cite{kelner2011metric} show that planar graphs with uniformly bounded degree satisfy $\lambda_L = O(L/n)$. More generally, they show that for graphs embedded in orientable surfaces of genus $g$, the same result holds if $g$ does not depend on $L$ or $n$. 
\end{example}

\begin{example}\label{ergg}
  Random geometric graphs are defined by associating each unit $i$ with a position $X_i \in \R^d$, i.i.d.\ across units with density $f$, and setting $A_{ij} = \mathbf{1}\{\norm{X_i-X_j} \leq r_n\}$ for some $r_n > 0$. For the graph to be sparse, $r_n$ must tend to zero. Several papers characterize the limiting behavior of Cheeger constants. \cite{muller2020optimal} show that the second-order Cheeger constant $h_2(\bm{A})$ is $o(1)$ a.s.\ when $r_n \rightarrow 0$ and $nr_n^d / \log n \rightarrow\infty$.\footnote{This is their Theorem 2.1 for $v=2$, $b=1$. Their notion of conductance corresponds to our definition multiplied by the link count $\sum_{i,j} A_{ij}$, what they label $\text{Vol}_{n,2}(\mathcal{X}_n)$. By their equation (2.12), that term is of exact order $n^2 r_n^d$. Furthermore, the limit in (2.11) is finite. Hence, the normalization in their Theorem 2.1 implies the Cheeger constant is $O(r_n) = o(1)$ a.s.} % that conforms with the rate in bester et al
  Theorem 12 of \cite{trillos2016consistency} provides a similar result for $L$th-order Cheeger constants, albeit defined slightly different than ours.
\end{example}

Both examples have $\lambda_L\rightarrow 0$, but the models are rather stylized. A perhaps more realistic model is the random connections model
\begin{equation}
  A_{ij} = \mathbf{1}\{\alpha_i + \alpha_j + r_n^{-1}\norm{X_i-X_j} > \varepsilon_{ij}\}, \label{rcm}
\end{equation}

\noindent which allows units further than distance $r_n$ to form links but with probability vanishing with distance $\norm{X_i-X_j}$. To our knowledge, there are no available results on the spectrum, but we provide simulation evidence below showing that this graph is well clustered.

The remaining examples show that not all graphs are well clustered.

\begin{example}
  A $k$-regular graph has the property that $\sum_j A_{ij}=k$ for all $i$. \cite{bollobas1988isoperimetric} proves that for $k\geq 3$, the isoperimetric number of a randomly drawn $k$-regular graph is at least a certain positive constant with probability approaching one. The isoperimetric number for $k$-regular graphs equals $k h_2(\bm{A})$, $k$ times the second-order Cheeger constant, so the latter is bounded away from zero.

  Expander graphs are those that, by construction, have Cheeger constants uniformly bounded away from zero, yet may still be sparse \citep{hoory2006expander}. 
\end{example}

Thus, for expander and $k$-regular graphs, $\lambda_2$ is bounded away from zero. However, these are extremely stylized models. The final example concerns more realistic models that have been applied to real-world networks.

\begin{example}\label{einhom}
  Inhomogeneous random graphs \citep{bollobas_phase_2007} satisfy
  \begin{equation*}
    \prob(A_{ij}=1 \mid \alpha_i, \alpha_j) = \frac{\kappa(\alpha_i,\alpha_j)}{n},
  \end{equation*}

  \noindent where typically the types $\alpha_i$ are assumed independent and $\kappa(\alpha_i,\alpha_j)$ assumed to have bounded support. Stochastic block models correspond to the special case in which types are finitely supported. These are widely used in the statistics literature for studying community detection. 

  One can easily choose $\kappa(\cdot)$ to generate homophily in types, meaning units with similar types have a higher probability of linking. In this case, it may seem that these graphs can be well clustered under reasonable conditions, with clusters roughly corresponding to sets of units of the same type. However, when $\kappa(\alpha_i,\alpha_j)$ has support bounded away from zero, $\bm{A}$ is stochastically bounded below by a nontrivial Erd\H{o}s-R\'{e}nyi graph (the special case with constant $\kappa(\cdot)$). According to \cite{hoffman2021spectral}, several papers studying Erd\H{o}s-R\'{e}nyi graphs ``show that the giant component can be partitioned into a well-connected expanding core together with small (logarithmic size) graphs attached to the core,'' where the core is a subgraph of the giant that is order $n$ in size.\footnote{See e.g.\ \cite{coja2007laplacian} (Theorem 1.2) for formal results for Erd\H{o}s-R\'{e}nyi graphs and \cite{zhang2018understanding} for related results on stochastic block models.} This suggests that even if $\lambda_L$ were small for these graphs, any low-conductance partition would be extremely unbalanced, so at best the graph would be close to poorly clustered. Our simulations below support this.
\end{example}

%---------------------------------------------
\subsection{Simulation Results}\label{ssimspec}
%---------------------------------------------

We simulate the random geometric graph (RGG) (\autoref{ergg}), random connections model (RCM) \eqref{rcm}, Erd\H{o}s-R\'{e}nyi graph (ER), and stochastic block model (SBM) (\autoref{einhom}) for $n=1000$ units. We calibrate parameters to obtain an average degree of about 5 for all graphs. For the RGG, $\{X_i\} \stackrel{iid}\sim \mathcal{U}([0,1]^2)$, and $r_n = (5/(\pi n))^{1/2}$. For the RCM, $\{X_i\}$ is drawn from the same distribution, $\{\alpha_i\} \stackrel{iid}\sim \mathcal{U}([0,1])$, $\{\varepsilon_{ij}\} \stackrel{iid}\sim$ logistic, $\{X_i\} \indep \{\alpha_i\} \indep \{\varepsilon_{ij}\}$, and $r_n = (5/(3.5\pi n))^{1/2}$. For ER, $\{A_{ij}\} \stackrel{iid}\sim \text{Ber}(5/n)$. Finally, for the SBM, we construct 10 blocks of 100 units each, where units in the same block have probability $40/n$ of linking and units in different blocks have probability $(5 \cdot 4/9)/n$ of linking. This results in an expected degree of six, with four links within cluster on average.
% \sum_{j not in block} E[A_ij] = 900 * 5 * 4/9 / n = 100 * 5 * 4 / n = 2000 / n
% \sum_{j in block} E[A_ij] = 100 * 40 / n = 4000 / n 

We analyze the spectrum of and apply spectral clustering to the subnetwork on the giant component. \autoref{sim_spectra} plots histograms and scatterplots of the spectra for a typical draw from each model. We see that both the RGG and RCM have a sizeable mass of eigenvalues near zero. In contrast, the ER and SBM each have only one zero eigenvalue ($\lambda_1$ is necessarily zero) and a large gap between $\lambda_1$ and $\lambda_2$. Consequently, only the RGG and RCM appear to be well clustered.

\begin{figure}
  \centering
  \includegraphics[scale=0.6]{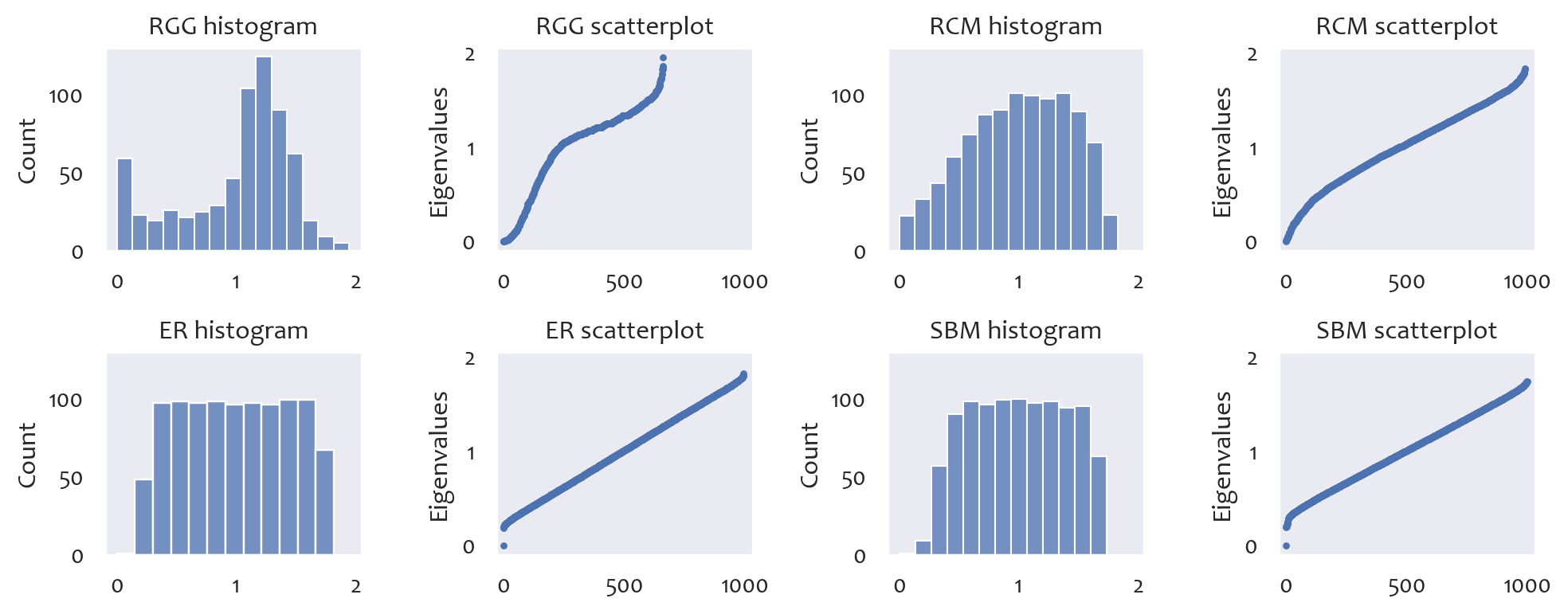}
  \caption{Histograms and scatterplots of eigenvalues.}
  \label{sim_spectra}
\end{figure}

We next present results on the conductance of clusters generated from spectral clustering, choosing the largest $L$ such that $\lambda_L < 0.05$. \autoref{snc} displays the result of 10k simulations. The first column of the table is the maximal conductance, the second the number of clusters in the giant component, the third the size of the spectral gap $\lambda_{L+1}-\lambda_L$, the fourth the $L$th smallest eigenvalue, the fifth the median cluster size, the sixth the size of the giant component, and the last the average degree. We find a larger number of clusters for the RGG and a moderate number for the RCM, both with small conductance. For ER and the SBM, the giant component is essentially never subdivided into smaller clusters. 

\autoref{RGGRCM} plots a single simulation draw for the RGG and RCM. Units are plotted according to their positions in $[0,1]^2$ and colored according to their clusters obtained from spectral clustering. The RCM contains longer-range links, producing a denser-looking figure, so spectral clustering generates fewer clusters compared to the RGG.

\begin{table}[ht]
\centering
\caption{Spectra and clusters}
\begin{threeparttable}
\begin{tabular}{lrrrrrrr}
\toprule
{} &  $\max_\ell \phi(\C_\ell)$ &  \# Clus. &  Gap &  $\lambda_L$ &  Med.\ Clus. &  Giant &  Degree \\
\midrule
RGG &                      0.158 &    40.0 & 0.004 &        0.048 &       18.3 & 758.4 &   4.83 \\
RCM &                      0.137 &    12.7 & 0.006 &        0.047 &       77.3 & 983.9 &   4.98 \\
ER  &                      0.000 &     1.0 & 0.180 &        0.000 &      992.9 & 993.1 &   5.00 \\
SBM &                      0.000 &     1.0 & 0.187 &        0.000 &      997.6 & 997.6 &   5.96 \\
\bottomrule
\end{tabular}
\begin{tablenotes}[para,flushleft]
  \footnotesize $n=1$k. Averages over 10k simulations. ``Gap'' = size of spectral gap, ``Med.\ Clus.'' = median cluster size, ``Giant'' = size of giant component, ``Degree'' = average degree.
\end{tablenotes}
\end{threeparttable}
\label{snc}
\end{table}

%----------------------------------------------------------------------
\section{Simulation Study}\label{smc}
%----------------------------------------------------------------------

We present simulation results on the finite-sample properties of the randomization test \eqref{sinfp} with clusters computed using spectral clustering and the $t$-test using a HAC variance estimator. If $\hat\theta$ is a sample mean, as is the case below, the latter is given by
\begin{equation*}
  \frac{1}{n} \sum_{i=1}^n \sum_{j=1}^n (W_i - \hat\theta) (W_j - \hat\theta) K\big( b_n^{-1}\ell_{\bm{A}}(i,j) \big),
\end{equation*}

\noindent where $K\colon \R_+ \rightarrow \R_+$ is a kernel function and $b_n \in \R_+$ a bandwidth. We use the uniform kernel and choose $b_n$ according to equation (13) of \cite{leung2020causal}.

\bigskip
\noindent {\bf Design 1.} We begin with a deliberately simplistic design to show that, even here, cluster-robust methods can be quite oversized for poorly clustered networks. We simulate the RGG, RCM, and SBM using the same parameters as the design in \autoref{ssimspec}. The SBM always has ten blocks. We then draw $\{\varepsilon_i\}_{i=1}^n \stackrel{iid}\sim \mathcal{N}(0,1)$ independently of the network and define $W_i = \varepsilon_i + \sum_j A_{ij} \varepsilon_j / \sum_j A_{ij}$ to obtain a simple form of network dependence in which units are only dependent if $\ell_{\bm{A}}(i,j) \leq 2$. We let $\theta_0 = \E[W_i] = 0$ and $g(W_i,\theta_0) = W_i$, so the goal is inference on the mean of $W_i$.

We test the null that $\theta_0=0$ at the 5 percent level. For each simulation draw, we compute $L=8$ clusters in the giant component. We treat all other components as individual clusters and discard clusters with sizes less than 20, so the average $L$ can be slightly smaller or larger than 8. We will see that, with $L=8$, clusters of the RGG will have low conductance, so the randomization test should perform well. Clusters of the RCM will have higher conductance, and it is unclear whether this will translate to substantial size distortion. Finally, clusters of the SBM will have high conductance, so the test may perform poorly.

We report rejection rates for the randomization test and two different $t$-tests. One uses a HAC variance estimator and the other the naive i.i.d.\ variance estimator. The latter serves to quantify the degree of dependence in the data. 

\autoref{simreject} reports the results of 10k simulations. We see that the randomization test controls size well under the RGG, outperforming HAC in smaller samples. This is a result of the low maximal conductance of the clusters. More surprising is that the test has good performance under the RCM, despite the conductance being as high as 0.22, with the test again outperforming HAC in smaller samples. For the SBM, we see that the randomization test consistently exhibits size distortion across all sample sizes due to the high maximal conductance of the clusters. HAC is also oversized in smaller samples, but the size distortion tends to zero as $n$ grows. The contrast in size is even starker for more complex forms of dependence considered in the next design.

\autoref{powercurves} plots power curves for the randomization test and HAC under the RGG for 10k simulations, indicating that the latter is more powerful. Analogous plots for the RCM are similar and therefore unreported.

\begin{table}[ht]
\centering
\caption{Size under design 1}
\begin{threeparttable}
\begin{tabular}{lrrrrrrrrr}
\toprule
{} & \multicolumn{3}{c}{RGG} & \multicolumn{3}{c}{RCM} & \multicolumn{3}{c}{SBM} \\
$n$ &    250 &     500 & 1000 &   250 &     500 & 1000 &    250 &     500 & 1000 \\
\midrule
Rand                      &  0.052 &   0.049 &   0.051 &  0.058 &   0.058 &   0.053 &  0.088 &   0.081 &   0.084 \\
HAC                       &  0.070 &   0.061 &   0.058 &  0.076 &   0.065 &   0.056 &  0.092 &   0.069 &   0.056 \\
IID                       &  0.274 &   0.272 &   0.279 &  0.275 &   0.276 &   0.282 &  0.294 &   0.293 &   0.294 \\
$\max_\ell \phi(\C_\ell)$ &  0.104 &   0.046 &   0.028 &  0.219 &   0.141 &   0.094 &  0.381 &   0.372 &   0.367 \\
\# Clusters               &  8.886 &   9.483 &  10.275 &  7.994 &   8.000 &   8.000 &  7.401 &   7.690 &   7.904 \\
1st Cluster               & 62.2 & 110.3 & 197.1 & 55.0 & 109.9 & 223.1 & 52.0 & 112.0 & 237.9 \\
2nd Cluster               & 40.3 &  75.3 & 142.9 & 44.3 &  87.3 & 171.4 & 41.0 &  85.2 & 174.4 \\
$L$th Cluster             & 23.8 &  28.0 &  45.2 & 24.3 &  40.0 &  69.3 & 23.8 &  49.0 & 100.8 \\
\bottomrule
\end{tabular}
\begin{tablenotes}[para,flushleft]
  \footnotesize Averages over 10k simulations. The first three rows report the sizes of level-5\% tests. The last three rows report cluster sizes in descending order of size. 
\end{tablenotes}
\end{threeparttable}
\label{simreject}
\end{table}

\begin{figure}[ht]
  \centering
  \includegraphics[scale=1]{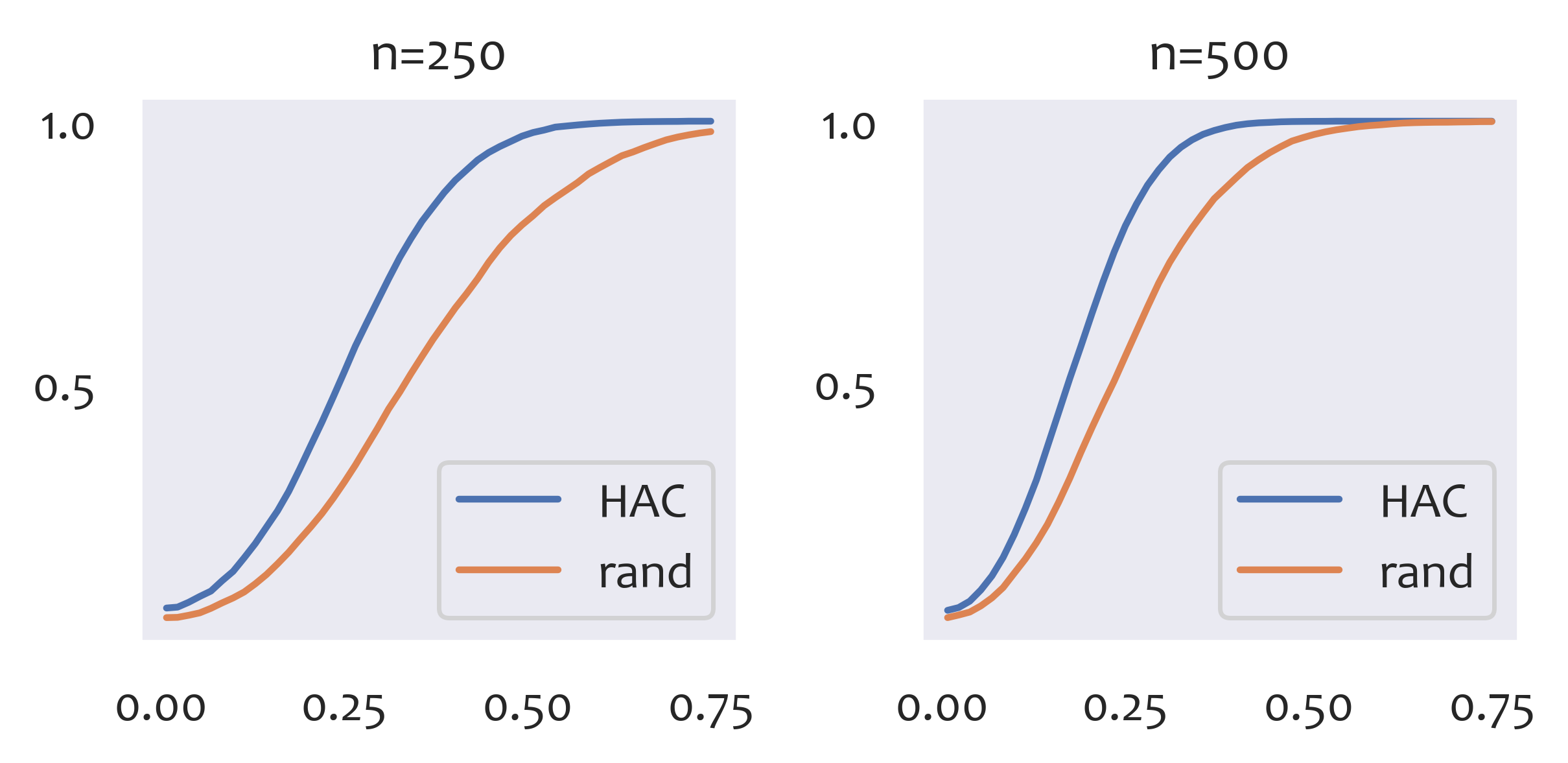}
  \caption{Power under design 1 for the RGG}
  \label{powercurves}
\end{figure}

\bigskip
\noindent {\bf Design 2.} The next design considers the more realistic problem of detecting network interference in a randomized experiment. We exactly replicate the designs in \S 5.2 of \cite{leung2020causal}, which involve two outcome models: a linear-in-means model and a binary game. For the former, $Y_i = V_i$, and for the latter, $Y_i = \ind\{V_i > 0\}$, where
\begin{equation*}
  V_i = \alpha + \beta \frac{\sum_j A_{ij}Y_j}{\sum_j A_{ij}} + \delta \frac{\sum_j A_{ij}D_j}{\sum_j A_{ij}}  +  D_i\gamma + \varepsilon_i.
\end{equation*}

\noindent Here $Y_i$ is unit $i$'s outcome, $\varepsilon_i$ its structural error, and $D_i$ its binary treatment assignment, which is i.i.d.\ and independent of all other primitives. For details on the choices of parameters and distributions of primitives, see \cite{leung2020causal}. For both models, we estimate a nonparametric spillover effect using  
\begin{equation*}
  \hat\theta = \frac{1}{n} \sum_{i=1}^n W_i \quad\text{for}\quad W_i = Y_i \left( \frac{T_i}{p_i} - \frac{1-T_i}{1-p_i} \right),
\end{equation*}

\noindent where $T_i=\bm{1}\{\sum_j A_{ij}D_j > 0\}$ and $p_i = \E[T_i]$ is the known propensity score. 

We simulate the two outcome models on three different networks. Two replicate those used in \cite{leung2020causal}, the configuration model and RGG, which are calibrated to the data on school friendship networks in his empirical application. The configuration model generates a network approximately uniformly at random from the set of all networks with a given degree sequence. That sequence is chosen to be the empirical degree sequence of the networks in his application, which has an average degree of about 8. This model plays the role of the SBM in \autoref{simreject}, being the network that lacks sufficient low-conductance clusters.

\begin{table}[ht]
\centering
\caption{Size under design 2}
\begin{threeparttable}
\begin{tabular}{lrrrrrrrrr}
\toprule
{} & \multicolumn{3}{c}{RGG} & \multicolumn{3}{c}{RCM} & \multicolumn{3}{c}{Configuration} \\
$n$       & 365 & 716 & 1408 & 365 & 722 & 1427 & 350 & 692 & 1375 \\
\midrule
LIM Rand         &   0.053 &   0.055 &    0.050 &   0.066 &   0.064 &    0.054 &         0.253 &   0.254 &    0.250 \\
LIM HAC          &   0.066 &   0.071 &    0.063 &   0.078 &   0.069 &    0.058 &         0.076 &   0.065 &    0.063 \\
TSI Rand         &   0.050 &   0.051 &    0.048 &   0.062 &   0.049 &    0.056 &         0.157 &   0.164 &    0.166 \\
TSI HAC          &   0.066 &   0.062 &    0.059 &   0.069 &   0.055 &    0.054 &         0.075 &   0.067 &    0.058 \\
$\max_S \phi(S)$ &   0.052 &   0.037 &    0.027 &   0.165 &   0.119 &    0.084 &         0.605 &   0.604 &    0.608 \\
\# Clusters      &   7.87 &   7.95 &    8.01 &   7.84 &   7.93 &    7.99 &         8.00 &   8.00 &    8.00 \\
1st Cluster      & 174.8 & 320.9 &  616.6 & 177.8 & 332.6 &  637.2 &       197.3 & 358.8 &  672.0 \\
2nd Cluster      & 141.2 & 252.6 &  467.1 & 134.7 & 243.4 &  456.3 &       119.0 & 210.2 &  390.3 \\
$L$th Cluster    &  61.0 & 113.9 &  213.3 &  69.0 & 128.5 &  243.9 &        79.8 & 147.4 &   21.0 \\
\bottomrule
\end{tabular}
\begin{tablenotes}[para,flushleft]
  \footnotesize Averages over 5k simulations. The first four rows report the sizes of level-5\% tests with LIM = linear-in-means, BG = binary game, Rand = randomization test, HAC = $t$-test with HAC estimator. The last three rows report cluster sizes in descending order of size. 
\end{tablenotes}
\end{threeparttable}
\label{rrdes2}
\end{table}

We additionally simulate the RCM using the same design as \autoref{simreject}. Like the RGG, the average degree is calibrated to the empirical application by setting $r_n = (\kappa/(3.5\pi n))^{1/2}$, where $\kappa$ is the average degree in the data. For both the RGG and RCM, we set the error terms $\varepsilon_i$ in the outcome models equal to $\nu_i + (X_{i1}-0.5)$, where $\nu_i$ is a normal error term and $X_{i1}$ is the first component of $i$'s position $X_i$ defined in \autoref{ssimspec}. The latter generates unobserved homophily. For additional details on the design, see \cite{leung2020causal}.

\autoref{rrdes2} reports rejection rates for 5k simulations for tests of the null that $\theta_0=\E[\hat\theta]$ equals its true value at the 5-percent level. We choose $L$ as in design 1 and discard all clusters with sizes less than 20. In the table, $n$ denotes the total population size across the largest, two largest, and four largest schools in the empirical application of \cite{leung2020causal}. The first two rows report results under the linear-in-means model and the next two under the binary game. 

Under the configuration model, which produces poorly clustered networks, the randomization test exhibits substantial size distortion, about five times the nominal level. However, the test is better sized than HAC for the other networks, which are well clustered. It exhibits some size distortion under the RCM when conductance exceeds 0.1 but remains better sized than HAC.

%----------------------------------------------------------------------
\section{Conclusion}\label{sconclude}
%----------------------------------------------------------------------

This paper studies the practice of partitioning a network, either manually or using an unsupervised learning algorithm, in order to apply cluster-robust inference methods. We isolate a key condition that, under mild assumptions, is necessary and sufficient for this practice to be valid: the clusters must all have low conductance (boundary-to-volume ratios). We call graphs ``well-clustered'' if a partition with this property exists and provide theoretical and simulation evidence showing that important classes of graphs are not well clustered. Our simulation study shows that cluster-robust inference methods, when applied to graphs that are well clustered, outperform HAC estimators in terms of size control. However, for graphs that are poorly clustered, cluster-robust methods can exhibit severe size distortion. Our results connect the literature on cluster-robust inference to spectral graph theory and spectral clustering, which provide simple methods for assessing whether a graph is well clustered and for constructing clusters in a more principled way, namely to minimize conductance. 

We provide three recommendations for empirical practice. First, for any candidate set of clusters, one should compute the maximal conductance, a measure of cluster quality, and aim for a value below 0.1. Second, given a network, one should first compute the spectrum of the Laplacian to ascertain the number of low-conductance clusters. Third, given a well-clustered network, one can construct a desired number of clusters using spectral clustering. If these methods produce at least five low-conductance clusters relatively balanced in size, then cluster-robust methods can perform well. Otherwise, an alternative procedure may be preferable.

\appendix
\numberwithin{equation}{section} % include section number in equation numbering
\numberwithin{table}{section}

%----------------------------------------------------------------------
\section{Appendix}\label{sappendix}
%----------------------------------------------------------------------

%---------------------------------------------
\subsection{Proofs}\label{sproofs}
%---------------------------------------------

\begin{proof}[Proof of \autoref{basic}]
  Let $\mathcal{S}$ be the subset of clusters $\C_\ell$ for which $\rho_\ell > 0$. Then $n^{-1/2}(n_\ell \hat G_\ell(\theta_0))_{\ell \in \mathcal{S}}$ is asymptotically normal with the desired limit variance by the CLT of \cite{kojevnikov2021limit} (Theorem 3.2) and Cram\'{e}r-Wold device. For all $\ell$ such that $\rho_\ell = 0$, their CLT implies $n^{-1/2}n_\ell \hat G_\ell(\theta_0) \plimarrow \zero$, so we can extend joint convergence for only clusters in $\mathcal{S}$ to joint convergence for the full vector, as in \eqref{convergence}. % lemma 1 of https://www.ssc.wisc.edu/~xshi/econ709/Lecture15.pdf
\end{proof}

\bigskip
\begin{proof}[Proof of \autoref{sufficiency}]
  We show $n^{-1}\cov(n_\ell\hat{G}_\ell(\theta_0),n_m\hat{G}_m(\theta_0)) = o(1)$ for any $\ell\neq m$. Let $\norm{\cdot}$ be the matrix sup norm. The covariance is bounded in norm by
  \begin{equation}
    \frac{1}{n} \sum_{i \in \C_\ell} \sum_{j \in \C_m} \norm{\E[g(W_i,\theta_0) g(W_j,\theta_0)']} \leq C' \sum_{s=1}^n \psi_n(s)^{1-2/p} \frac{1}{n} \sum_{i \in \C_\ell} \sum_{j \in \C_m} \ind\{\ell_{\bm{A}}(i,j)=s\} \label{1st}
  \end{equation}

  \noindent for some constant $C'>0$ and $p$ in \autoref{apsi}(b). This follows from Corollary A.2 of \cite{kojevnikov2021limit} (choose $A=\{i\}$, $B=\{j\}$, $a=b=1$, and $q=p$), which we may apply due to Assumptions \ref{apsi} and \ref{areg}. The sum over $s$ terminates at $n$ because there are $n$ units.

  We next bound the term $n^{-1} \sum_{i \in \C_\ell} \sum_{j \in \C_m} \ind\{\ell_{\bm{A}}(i,j)=s\}$ in \eqref{1st}. Let 
  \begin{equation*}
    \mathcal{B}_{\bm{A}}(\C_\ell) = \big\{i\in\C_\ell\colon \max_{j\in\mathcal{N}_n\backslash\C_\ell} A_{ij}=1\big\},
  \end{equation*}
  
  \noindent the boundary of $\C_\ell$. Given $i \in \C_\ell$ and $j \in \C_m$ with $\ell_{\bm{A}}(i,j)=s$, there must exist some $k \in \mathcal{B}(\C_\ell)$ such that $\ell_{\bm{A}}(i,k)=d$ and $\ell_{\bm{A}}(k,j)=d'$, for some $d,d'$ satisfying $d+d'=s$ and $d'\geq 1$.
  Hence,
  \begin{multline}
    \frac{1}{n} \sum_{i \in \C_\ell} \sum_{j \in \C_m} \ind\{\ell_{\bm{A}}(i,j)=s\} \\\leq \frac{1}{n} \sum_{i=1}^n \sum_{j=1}^n \sum_{d=0}^{s-1} \sum_{k=1}^n \ind\{\ell_{\bm{A}}(i,k)=d\} \ind\{k \in \mathcal{B}_{\bm{A}}(\C_\ell)\} \ind\{\ell_{\bm{A}}(k,j)=s-d\} \\
    \leq \sum_{d=0}^{s-1} \frac{1}{n} \sum_{k=1}^n \ind\{k \in \mathcal{B}_{\bm{A}}(\C_\ell)\} \abs{\mathcal{N}_{\bm{A}}(k,d)} \,\abs{\mathcal{N}_{\bm{A}}(k,s-d)} \\
    \leq \sum_{d=0}^{s-1} \left( \frac{1}{n} \sum_{k=1}^n \abs{\mathcal{N}_{\bm{A}}(k,d)}^{1+\epsilon} \abs{\mathcal{N}_{\bm{A}}(k,s-d)}^{1+\epsilon} \right)^{1/(1+\epsilon)} \left( \frac{1}{n} \sum_{k=1}^n \ind\{k \in \mathcal{B}_{\bm{A}}(\C_\ell)\} \right)^{\epsilon/(1+\epsilon)} \label{2nd}
  \end{multline}

  \noindent for any $\epsilon>0$ by H\"{o}lder's inequality. Since 
  \begin{equation*}
    \frac{1}{n} \sum_{k=1}^n \abs{\mathcal{N}_{\bm{A}}(k,d)}^{1+\epsilon} \abs{\mathcal{N}_{\bm{A}}(k,s-d)}^{1+\epsilon} \leq \frac{1}{n} \sum_{k=1}^n \abs{\mathcal{N}_{\bm{A}}(k,s)}^{2(1+\epsilon)},
  \end{equation*}

  \noindent we have
  \begin{equation*}
    \eqref{2nd} \leq s \bigg( \underbrace{\frac{1}{n} \sum_{k=1}^n \abs{\mathcal{N}_{\bm{A}}(k,s)}^{2(1+\epsilon)}}_{M_n(s,2(1+\epsilon))} \bigg)^{1/(1+\epsilon)} \bigg( \underbrace{\frac{\abs{\mathcal{B}_{\bm{A}}(\C_\ell)}}{\text{vol}_{\bm{A}}(\C_\ell)}}_{\leq \phi_{\bm{A}}(\C_\ell)} \underbrace{\frac{\text{vol}_{\bm{A}}(\C_\ell)}{n}}_{\leq \delta(\bm{A})} \bigg)^{\epsilon/(1+\epsilon)} ,
  \end{equation*}

  \noindent noting that $\abs{\mathcal{B}_{\bm{A}}(\C_\ell)} \leq \abs{\partial_{\bm{A}}(\C_\ell)}$. Therefore,
  \begin{equation*}
    \eqref{1st} \leq C' \sum_{s=1}^n s\, M_n(s,2(1+\epsilon))^{1/(1+\epsilon)} \psi_n(s)^{1-2/p} \left( \max_{1\leq \ell \leq L} \phi_{\bm{A}}(\C_\ell) \cdot \delta(\bm{A}) \right)^{\epsilon/(1+\epsilon)}.
  \end{equation*}

  \noindent Choosing $\epsilon$ according to \autoref{apsi2}, this is $o(1)$ by Assumptions \ref{acond} and \ref{apsi2}.
\end{proof}

\bigskip
\begin{proof}[Proof of \autoref{necessity}]
  Consider a data-generating process such that, for some $\gamma,\gamma' > 0$ and $\gamma'' \geq 0$ and all $n$, $\E[g(W_i,\theta_0) g(W_j,\theta_0)] = \gamma \mathbf{1}\{i=j\} + \gamma' A_{ij} + \gamma'' (1-A_{ij})$ for all $i,j\in\mathcal{N}_n$ and $n$. Various restrictions on $\bm{A}$ and $\gamma'$ can ensure that Assumptions \ref{apsi} and \ref{apsi2} hold. A simple example is when $\gamma''=0$ and $\sum_j A_{ij}$ is uniformly bounded over $i$ and $n$, which yields a bounded-degree network sequence and locally dependent data where only linked observations are correlated. Various restrictions on $g(\cdot,\theta_0)$ can ensure \autoref{areg}, for example if it has bounded range.

  Since the network is sparse but \autoref{acond} fails, $\liminf_{n\rightarrow\infty} \max_{1\leq\ell\leq L} \phi_{\bm{A}}(\C_\ell) > 0$. Then, for some $\ell$, the following has positive limit infimum:
  \begin{equation*}
    \frac{\sum_{i\in\C_\ell} \sum_{j\not\in\C_\ell} A_{ij}}{\text{vol}_{\bm{A}}(\C_\ell)} = \sum_{m\neq \ell} \frac{\sum_{i\in\C_\ell} \sum_{j\in\C_m} A_{ij}}{\text{vol}_{\bm{A}}(\C_\ell)}.
  \end{equation*}

  \noindent This implies that for some cluster $m \neq \ell$,
  \begin{equation*}
    \liminf_{n\rightarrow\infty} \frac{\sum_{i\in\C_\ell} \sum_{j\in\C_m} A_{ij}}{\text{vol}_{\bm{A}}(\C_\ell)} > 0. 
  \end{equation*}
  
  \noindent For such $\ell,m$,
  \begin{multline*}
    \abs{n^{-1} \cov(n_\ell \hat{G}_\ell(\theta_0), n_m \hat{G}_m(\theta_0))} = \bigg| \frac{1}{n} \sum_{i \in \C_\ell} \sum_{j \in \C_m} \E[g(W_i,\theta_0) g(W_j,\theta_0)] \bigg| \\ \geq \gamma\, \frac{1}{n} \sum_{i \in \C_\ell} \sum_{j \in \C_m} A_{ij} = \gamma\, \frac{\sum_{i \in \C_\ell} \sum_{j \in \C_m} A_{ij}}{\text{vol}_{\bm{A}}(\C_\ell)} \frac{n_\ell}{n} \frac{1}{n_\ell} \sum_{i \in \C_\ell} \sum_{j=1}^n A_{ij}.
  \end{multline*}

  \noindent By assumption, $n_\ell / n \rightarrow \rho_\ell > 0$ and $\liminf_{n\rightarrow\infty} \min_\ell n_\ell^{-1} \sum_{i\in\C_\ell} \sum_{j=1}^n A_{ij} >0$. Therefore, the right-hand side of the above display has a strictly positive limit infimum.
\end{proof}

%---------------------------------------------
\subsection{Verifying Weak Network Dependence}\label{sveri}
%---------------------------------------------

\cite{leung2020causal} shows that $\psi_n(s)$ is uniformly $O(e^{-c\,s})$ for some $c>0$ for two well-known models of social interactions. His \S A verifies the working paper version of \autoref{apsi}(b) under the assumption that $\psi_n(s) = e^{-c\,s}$ for graphs with polynomial and exponential neighborhood growth rates, meaning
\begin{equation*}
  \max_{i\in\mathcal{N}_n}\, \abs{\mathcal{N}_{\bm{A}}(i,s)} = C s^d \quad\text{and}\quad \max_{i\in\mathcal{N}_n}\, \abs{\mathcal{N}_{\bm{A}}(i,s)} = C e^{\beta s},
\end{equation*}

\noindent respectively, for universal constants $C,d,\beta > 0$. In the polynomial case, no additional conditions are needed. In the exponential case, he requires $c > 3\beta$, meaning $\psi_n(s)$ decays sufficiently fast enough relative to the rate at which neighborhood sizes expand.

We verify \autoref{apsi2} under this setup. In the polynomial case, 
\begin{equation*}
  \sum_{s=1}^n s\, M_n(s,2(1+\epsilon))^{1/(1+\epsilon)} \psi_n(s)^{1-2/p} \leq C^2 \sum_{s=1}^\infty s^{2d+1} e^{-(1-2p^{-1}) c\,s} < \infty.
\end{equation*}

\noindent In the exponential case, 
\begin{equation*}
  \sum_{s=1}^n s\, M_n(s,2(1+\epsilon))^{1/(1+\epsilon)} \psi_n(s)^{1-2/p} \leq C^2 \sum_{s=1}^\infty s\,e^{(2\beta-(1-2p^{-1})c) s}.
\end{equation*}

\noindent This is finite if $(0.5-p^{-1})c > \beta$ for $p$ in \autoref{apsi}(b), which is slightly stronger than $c>3\beta$ since $p>4$ under \autoref{apsi}.

%----------------------------------------------------------------------

\FloatBarrier
\phantomsection
\addcontentsline{toc}{section}{References}
\bibliography{cluster}{} 
\bibliographystyle{aer}

%----------------------------------------------------------------------

\end{document}